\documentclass[12pt]{article}%

\usepackage{amsfonts}
\usepackage{amssymb}
\usepackage{setspace}
\usepackage[round]{natbib}
\usepackage{amsmath}%
\usepackage{amsthm}%
\usepackage{relsize}

\setcounter{MaxMatrixCols}{30}
\usepackage{palatino}
\usepackage{paralist}

\usepackage[top=8pc,bottom=8pc,left=8pc,right=8pc]{geometry}

\usepackage[plain]{algorithm}
\usepackage{algorithmic}

\newtheorem{theorem}{Theorem}

\newtheorem{corollary}[theorem]{Corollary}

\newtheorem{lemma}[theorem]{Lemma}

\usepackage{psfrag,epsf}
\usepackage[dvipdfm]{graphicx}
\usepackage{enumerate}
\usepackage[round]{natbib}
\usepackage{url} % not crucial - just used below for the URL
\usepackage{epstopdf}
\usepackage{subfigure}

\usepackage{booktabs}

\title{Tensor decomposition with generalized lasso penalties}
\author{Oscar-Hernan Madrid-Padilla \\
	%Nicholas G.~Polson \\
	James G. Scott } 
%t\footnote{Associate Professor; Department of Information, Risk, and Operations Management and Department of Statistics and Data Sciences, University of Texas at Austin; james.scott@mccombs.utexas.edu}}
\date{This version: \today}

\singlespacing

\begin{document}
	
	\maketitle
	
	\begin{abstract}
		We present an approach for penalized tensor decomposition (PTD) that estimates smoothly varying latent factors in multi-way data.  This generalizes existing work on sparse tensor decomposition and penalized matrix decompositions, in a manner parallel to the generalized lasso for regression and smoothing problems.  Our approach presents many nontrivial challenges at the intersection of modeling and computation, which are studied in detail.  An efficient coordinate-wise optimization algorithm for (PTD) is presented, and its convergence properties are characterized.  The method is applied both to simulated data and real data on flu hospitalizations in Texas.  These results show that our penalized tensor decomposition can offer major improvements on existing methods for analyzing multi-way data that exhibit smooth spatial or temporal features.
		
		\bigskip
		
		\noindent Key words:  multiway data, tensors, trend filtering, penalized methods, convex optimization
	\end{abstract}
	
	\newpage

\section{Introduction}
\label{sec:intro}

\subsection{Structure and sparsity in multiway arrays}

In recent years there has been an increasing interest in the use of penalized methods for matrix and tensor decompositions.  As in classical principal-components analysis (PCA), the goal of these methods is to represent a high-dimensional data matrix or multiway array in terms of a lower-dimensional set of latent factors.  This line of work differs from classical techniques, however, in the use of penalty functions that encourage these estimated factors to be sparse, structured, or both.  As many previous authors have demonstrated, such regularized estimators usually exhibit a favorable bias-variance tradeoff, particularly when the size of the array far exceeds the number of samples.  They can also make the estimated factors themselves much more interpretable to practitioners.

Existing methods for penalized matrix decompositions have been shown to outperform classical PCA in discovering patterns in application areas such as genomics and neuroscience.  Penalties that encourage structure (such as the fused lasso) provide interpretable results when there is a natural order of the measurements, while penalties that encourage sparsity are useful when there is no such ordering \citep{witten2009penalized}.  In the high-dimensional tensor setting however, existing decomposition methods only enforce sparse constraints.  We address this gap by proposing a method for penalized tensor decomposition (PTD) that allows arbitrary combinations of sparse or structured penalties along different margins of a data array.

%Although our PTD method can also recover sparse structures, it differs from existing work on tensors in that it can also recover latent factors in multiway data array that exhibit interesting non-sparse structures, such as spatial or temporal autocorrelation.  Examples of low-dimensional structures that our method is designed to recover include locally constant factors punctuated by jumps (for which a fused lasso penalty is used) or smooth functions that are well approximated by a spline basis (for which polynomial trend filtering is used).  We show both on real and simulated data that, when such structures are present in the data, our penalized approach recovers them much more accurately that existing methods.

%\subsection{Overview of problem and approach}

Given a data array $Y = \{Y_{lts}\}$, the statistical problem that we study is to find a low-dimensional factor representation (also known as a Parafac decomposition) such that the factors are constrained to be sparse and/or smooth.  For ease of presentation, we restrict attention to the three-way case, but the generalization of our approach to arrays with more than three modes is straightforward.

More explicitly, suppose we are given a set of observations $y_{l,t,s}$, the elements of a three dimensional tensor
$\underline{Y}$ $\in$ $\mathbb{R}^{L\times T\times S}$, that have been generated from the complete tensor model
\begin{equation}
\label{the_model}
y_{l,t,s}=  \sum_{j=1}^J d_{j}^*\, u_{lj}^*\circ v_{tj}^*\circ w_{sj}^* + e_{l,t,s},\,\,\,\,,\,\, l \in \{1,\ldots,L\},\,\,t \in \{1,\ldots,T\},\,\,\,s \in \{1,\ldots,S\}
\end{equation}
with unknown hidden vectors $u_{:j}^*$ $\in$ $\mathbb{R}^{L},$
$v_{:j}^*$ $\in$ $\mathbb{R}^{T},$ $w_{:j}^*$
$\in$ $\mathbb{R}^{S}$, $j = 1, \ldots, J$ and scalars $d_{j}^*,$
$j = 1, \ldots, J$. We will later discuss the missing data problem. For simplicity we assume that the variance $\sigma^{2}$ of the error term $e_{l,t,s}$ is
known and equal to $1$. Moreover, when $J=1$  we suppress the index $j$.   Our goal is to estimate these latent factors, which can be challenging since we only have one  observation for each combination $u_{lj}^*,$ $v_{tj}^*,$
$w_{sj}^*$. However, we assume that this task is aided by the presence of special
structure in these true vectors.  Explicitly, we assume that some of the vectors $\{u^*_{\cdot,j} \}_{j=1}^{J},\{v^*_{\cdot,j}\}_{j=1}^{J},$, $\{w^*_{\cdot,j} \}_{j=1}^{J} $ are restrictions of smooth functions defined in the interval $[0,1]$. For instance, it might be the case that $u^*_{lj} = u^*_{j}(l/L)$ for $l = 1,\ldots,L$, where $u^*_{j}$ is a piecewise continuous or differentiable function on $[0,1]$.

%they may vary smoothly or be locally constant as function of their indices

A natural situation in which this would arise is when one of the modes of the data array corresponds to a temporal or spatial axis.  Our main contribution is to provide optimization algorithms for finding Parafac decompositions that shrink towards such structure.  To do so, we apply a generalized lasso penalty along each mode of the array.  We refer to this class of methods as penalized tensor decompositions (PTD). 

We face two main challenges in estimating the factors.  First, the resulting optimization problem is non-convex. We propose to reach a stationary point using block coordinate descent, as in \citet{allen2012sparse}, and we provide convergence rates for a single-block udpate.  This leads us to the second challenge: unlike in the sparse unconstrained problem formulated by \cite{allen2012sparse}, for our case of a generalized lasso penalty, it is not clear how to make the block-coordinate updates. Our results provide a novel way of doing so that exploits the multi-convex structure of the problem, and that provides efficient algorithms for finding the factors when formulating the problem either in a penalized or constrained form.

\subsection{Relation to previous work}

Structurally constrained estimation is an active area of research, and we do not attempt a comprehensive review. Our work draws heavily in the one dimensional case on advances in understanding the one dimensional case, where penalized regression has been widely studied in the literature \citep{friedman2010applications,kim2009ell_1,tibshirani1996regression,tibshirani2005sparsity}.  For instance, in protein mass spectroscopy and gene expression data measured from a microarray, the fused lasso has been used to obtain interpretable results \citep{tibshirani2005sparsity}. The fused lasso is a natural choice here, since it encourages neighboring measurements to share the same underlying parameter.   Similarly, to enforce smoothness in the solution, trend filtering has been proposed \cite{kim2009ell_1} as a way to place one-dimensional function estimation within the convex optimization framework.  The trend filtering penalized-regression problem has found applications in areas as diverse as image processing and demography.

In the case of matrix decomposition, the need for penalized methods arises in applications in genetic data, where there are multiple comparative genomic hybridizations and we expect correlation among observations at genetic loci that are close to each other along the chromosome.  As shown in \cite{witten2009penalized}, by considering different choices of penalties, we can recover different kinds of structures along either the rows or the columns of a data matrix.  See the references in \cite{witten2009penalized} for a much more comprehensive bibliography on sparse principal components analysis.

In moving from matrices to multiway arrays, Parafac decompositions offer an attractive  framework for recovering latent lower dimensional structure. This is due to their easy interpretability as well as feasibility of computation   \citep{anandkumar2014guaranteed,harshman1970foundations,karatzoglou2010multiverse,kolda2009tensor,kroonenberg2008applied}. More generally, Tucker models have been proposed as general models for multiway data and have been successfully applied in many areas \citep{cichocki2013tensor}. Other popular methods for tensor decompositions include those described in \cite{bhaskara2013smoothed} and \cite{de2000multilinear}. However, these approaches do not provide structural or sparse solutions. This point was made by \cite{allen2012sparse}, who proposed a sparse penalized Parafac decomposition method that outperforms the classical Parafac decomposition when the true solutions are sparse. More recently, \cite{sun2015provable} also considers sparse tensor recovery and provides statistical guarantees for such a task.

In this paper, we study methods for structured, as opposed to sparse, tensor factorizations. Our approach is inspired by the penalized matrix decomposition methods from \cite{witten2009penalized}. We  generalize the matrix-decomposition problem to the framework of tensor Parafac decompositions while incorporating solution algorithms for a more broad class of penalties, including trend filtering for factors that are smooth (e.g in space or time). % Our developments also relate to the Parafac decomposition from  \cite{allen2012sparse}, who only considered sparse factors.
%\label{submission}

\subsection{Basic definitions}
\label{preliminaries}
We now introduce notation and definitions used throughout the paper. This material
can be found in \cite{cichocki2009nonnegative}, to which we refer the reader for more details.  Let $I_{1},$$I_{2}$..., $I_{N},$ denote index $N$
upper bounds. A tensor $\underline{Y}$ $\in$ $\mathbb{R}^{I_{1}\times I_{2}\times...\times I_{N}}$
of order $N$ is an $N-$way array where elements $y_{i_{1},i_{2}...,i_{N}}$
are indexed by $i_{n}$ $\in$ $\left\{ 1,2,....,I_{n}\right\} ,$
for $n$ $\text{= }$ $1,...,$$N.$
Tensors are denoted by capital letters with a bar, e.g.~$\underline{Y}$ $\in$ $\mathbb{R}^{I_{1}\times I_{2}\times...\times I_{N}}$.
Matrices are denoted by capital letters, e.g~$Y$, and for a matrix $Y$ we denote by $Y^{-}$ its generalized inverse.  Vectors are denoted by lower case letters, e.g~$y$. The outer product of two vectors $a$ $\in$ $\mathbb{R}^{I}$
and $b$ $\in$ $\mathbb{R}^{J}$ yields a rank-one matrix $A=a\circ b=ab^{T}\in\mathbb{R}^{I\times J}$,
and the outer product of three vectors $a$ $\in$ $\mathbb{R}^{I},$
$b$ $\in$ $\mathbb{R}^{J}$ and $c$ $\in$ $\mathbb{R}^{Q}$ yields
a third-order rank-one tensor$A=a\circ b\circ c\in\mathbb{R}^{I\times J\times Q}$. We use  $\|\cdot \|_{F} $ to indicate the usual Frobenius norm of tensors.  The mode-$n$
multiplication of a tensor $\underline{Y}$ $\in$ $\mathbb{R}^{I_{1}\times I_{2}\times...\times I_{N}}$
by a vector $a$ $\in$ $\mathbb{R}^{I_{n}}$ is denoted by $Z:=\underline{Y}\times_{n}a\in\mathbb{R}^{I_{1}\times\ldots\times I_{n-1}\times I_{n+1}\times\ldots\times I_{N}}$,
and element-wise we have $z_{i_{1}...i_{n-1}i_{n+1}\ldots i_{N}}=  \sum_{i=1}^{I_n} y_{i_{1}i_{2}...i_{N}}a_{i_{n}}  $.
%\[
%z_{i_{1}...i_{n-1}i_{n+1}\ldots i_{N}}=\underset{i_{n}=1}{\overset{I_{n}}{\sum}}y_{i_{1}i_{2}...i_{N}}a_{i_{n}.}
%\]
\subsection{Outline}

The rest of the paper is organized as follows. Section 2 defines our statistical approach to rank-1 tensor decompositions based on generalized lasso penalties. Section 3.1 then provides solution algorithms for our problem formulation when the penalties are used to define a set of constraints on the parameters. This is done by exploiting the efficiency of solution-path algorithms for generalized-lasso regression problems. In Section 3.2, we then study an unconstrained version of the problem where the penalties enter directly into the objective. Because the original problem is not convex, this is not equivalent to the constrained formulation, and some important algorithmic differences are highlighted.  After developing algorithms for rank-1 tensor decompositions, Section 3 concludes by extending these ideas to the general case of multiple factors.

Section 4 presents a convergence analysis for our fundamental rank-1 decomposition algorithm.  In Section 5, using simulated data, we benchmark against state-of-the-art methods on rank-1 and multiple factor decompositions, measuring the error of recovery with the Frobenius norm.  We then validate our algorithms on two real data sets involving flu hospitalizations in Texas and motion-capture data. Finally, Section 6 present a brief discussion of the overall framework proposed in this paper.
%\label{sec:conc}
%\section{Penalized tensor decomposition}
\section{Penalized tensor decompositions}
\label{the_problem}

We first consider the case $J=1$. Taking a point of view similar to  \cite{witten2009penalized}, for positive constants $c_u$, $c_v$  and $c_w$,  we formulate the following problem:
%To find the latent structure in (\ref{the_model}), we take a point of view similar to \cite{witten2009penalized} and formulate the following optimization problem:
\begin{equation}
\label{rank_1_problem}
\begin{aligned}
& \underset{u\in\mathbb{R}^{L}, v \in \mathbb{R}^{T}, w\in\mathbb{R}^{S}, g \in \mathbb{R} }{\text{minimize}}
& & \|Y-g\, u\circ v\circ w
\|_{F}^{2} \\
& \text{subject to}
& &  \| D^{u}u\|_1 \,\leq\,c_u,\quad \|D^{v} v\|_1\,\leq\,c_v,\quad \|D^{w}w\|_1 \,\leq\,c_w\\
& & &  u^{T}u \,=\,1, \quad v^{T}v \,=\,1, \quad w^{T}w \,=\,1 \,,
\end{aligned}
\end{equation}
where $D^{u}$, $D^{v}$ and $D^{w}$ are matrices which are designed to enforce structural constraints. When the context is clear we will suppress the superscript and simply use the notation $D$. We note that an alternative, although non-equivalent, formulation is based on an unconstrained version of  (\ref{rank_1_problem}) given as
\begin{equation}
\label{rank_1_problem2}
\begin{aligned}
& \underset{u^{T}u \,=\,1,\, v^{T}v \,=\,1,\, w^{T}w \,= \,1}{\text{minimize}}
& &  \|Y-g\, u\circ v\circ w
\|_{F}^{2}  + \lambda_u\, \|D^{u} u\|_1 + \lambda_v\, \|D^{v} v\|_1  + \lambda_w\, \|D^{w} w\|_1 \, ,
%& \text{subject to}
%& & u^{T}u \,=\,1,\, v^{T}v \,=\,1,\, w^{T}w \,= \,1 \, .
\end{aligned}
\end{equation}
with the same unit-norm constraints on the factors.   In Section 3, we will discuss the computational differences between these formulations in detail.

We now briefly discuss a broad class of penalties of potential interest to practitioners. We focus on choices that penalize first- and higher-order differences  in each factor, which correspond to the fused lasso and trend filtering, respectively \citep{tibshirani2011solution}. The fused lasso penalty was suggested in \cite{witten2009penalized} to detect regions of gain for sets of genes in matrix-decomposition problems.  For this penalty, the associated $D$ matrix is the $(S-1) \times S$ first-difference matrix, $D_{i,j} = 1$ if $j=i$,  $D_{i,j} = -1$ if $j= i+1$ and $D_{i,j} =0$ otherwise. 
%\[
%D_{1d}=\left(\begin{array}{rrrrrr}
%1 & -1 & 0 & 0 & \mathbf{\cdots} & 0\\
%0 & 1 & -1 & 0 & \cdots & 0\\
%\vdots &  &  &  & \cdots\\
%0 & \cdots &  & 0 & 1 & -1
%\end{array}\right).
%\]
As discussed in \cite{tibshirani2011solution}, this penalty gives a piecewise-constant solution to linear-regression problems, and it is used in settings where the coordinates in the true model are closely related to their neighbors.  A related choice for $D$ is oriented incidence matrices of a graph; see, e.g.~\cite{arnold2014efficient}.  These are constructed as generalizations of the 1-dimensional fused lasso on an underlying graph $G$.

% with vertex
%set $E = \left\{ 1,...,p\right\}$. If $G$ has $m$ edges
%$\left\{ e_{1},...,e_{m}\right\} $ then the matrix $D$ $=$ $\left\{ D_{lk}\right\} $
%has dimensions $m$ $\times$ $p$, where the $l$-th row is associated
%with edge $e_{l}$ and satisfies
%\[
%D_{lk}=\begin{cases}
%\begin{array}{ccc}
%-1 & k=i\\
%1 & k=j,\\
%0 & \mathrm{otherwise.}
%\end{array} & k=1,....,p\end{cases}.
%\]

Still other choices for $D$ correspond to polynomial trend filtering, which impose a piecewise polynomial structure on the underlying object of interest. These are constructed as follows. First define the polynomial
trend filtering of order $1$ as $D_{tf,1} \in \mathbb{R}^{(S-2) \times S} $ where $D_{tf,1}  = (D^{(1)})^T\,D^{(1)}$ and $D^{(1)} \in \mathbb{R}^{(S-1) \times S}$ is the first order difference matrix.
Then, recursively construct the polynomial trend filtering matrix of order $k$ as $D_{tf,k}=D_{1,d}\cdot D_{tf,k-1}$.
%\[
%D_{tf,k}=D_{1,d}\cdot D_{tf,k-1}.
%\]
%\[
%D_{tf,1}=\left(\begin{array}{ccccccc}
%-1 & 2 & -1 & 0 & 0 & \cdots & 0\\
%0 & -1 & 2 & -1 & 0 & \cdots & 0\\
%\cdots\\
%0 &  & \ldots & 0 & -1 & 2 & -1
%\end{array}\right) \, .
%\]

The polynomial trend filtering fits (especially for $k$ $=$ $3$)
are similar to those that one could obtain using regression splines
and smoothing splines, However, the knots (changes in kth derivative)
in trend filtering are selected adaptively based on the data,
jointly with the inter-knot polynomial estimation \citep{tibshirani2011solution}. A comprehensive
study of polynomial trend filtering can be found in \cite{tibshirani2014adaptive}.
%The fused lasso allow us to estimate piecewise constant functions and has successfully being used to detect regions of gain for sets of genes in matrix-decomposition problems \cite{witten2009penalized}. More generally, trend filtering fits polynomial functions and  provides results similar to those that one could obtain using regression splines and smoothing splines, However, the knots in trend filtering are selected adaptively based on the data, jointly with the inter-knot polynomial estimation \cite{tibshirani2011solution}. A comprehensive study of polynomial trend filtering can be found in \cite{tibshirani2014adaptive}.
We note that Problem (\ref{rank_1_problem2})  was already studied in \cite{allen2012sparse} for the case in which all the matrices $D^{u}$, $D^{v}$  and $D^{w}$  are set to be the identity. This is the case of having the  L1 penalty on each mode. The authors in \cite{allen2012sparse}  proposed a fast algorithm to solve the problem.  However, the L1 penalty has the disadvantage of encouraging only sparsity. If the true factors are not sparse but instead locally flat or smooth, then having sparse constraints on the factors performs poorly. This phenomenon was observed in \cite{witten2009penalized} in the context of matrix decompositions, where the fused lasso penalty was shown to properly recover flat vectors in the factors of the decomposition when the L1 penalty failed to do so. We will extend these ideas to tensor decompositions, applying penalties from the generalized lasso class.  We now turn to the question of how to fit these models efficiently.

%In this paper we make two important contributions. First, we characterize the solutions to (\ref{rank_1_problem2}),  providing solutions algorithms and showing that the solutions lie in the corresponding unit ball. Secondly, we show that starting with appropriate initials our algorithms quickly converge to the true unknown factors. This is then validated in our experiments section

\section{Solution algorithms}

\subsection{Constrained problem}

Since (\ref{rank_1_problem}) is a non-convex problem, we propose to consider a block coordinate-descent routine. However, in order to have convex block-coordinates-updates, we instead state the following problem:
\begin{equation}
\label{rank_1_problem3}
\begin{aligned}
& \underset{u\in\mathbb{R}^{L},v \in\mathbb{R}^{T},w\in\mathbb{R}^{S}}{\text{maximize}}
& & \underline{Y} \times_{1} u \times_{2} v \times_{3} w \\%\underset{i,j,k}{\sum}y_{i,j,k}u_{i}v_{j}w_{k}\\
& \text{subject to}
& &  \| D^{u}u\|_1 \,\leq\,c_u,\quad \|D^{v} v\|_1\,\leq\,c_v,\quad \|D^{w}w\|_1 \,\leq\,c_w\\
& & &  u^{T}u \,\leq\,1, \quad v^{T}v \,\leq\,1, \quad w^{T}w \,\leq\,1.
\end{aligned}
\end{equation}
This differs from (\ref{rank_1_problem}) in two ways. First, the objective has been reformulated in a more convenient way, but it is easy to show that this results  in an equivalent problem \citep{kolda2009tensor}. Secondly, the unit norm constraints have been relaxed to the convex constraints that each factor fall into the unit ball.  Additionally, following \cite{witten2009penalized},  a simple modification can naturally handle missing data. Denoting by $M$ the set missing observations, we solve the missing data problem by replacing the objective function in (\ref{rank_1_problem3}) with the function
\begin{equation}
%\begin{aligned}
F(u,v,w) = \sum_{(l,t,s)  \in \{1,\ldots,L\}\times  \{1,\ldots,T\} \times  \{1,\ldots,S\} - M }  Y_{l,t,s}\,u_{l}\,v_{t}\,w_{s}  \\%\underset{i,j,k}{\sum}y_{i,j,k}u_{i}v_{j}w_{k}
%\underset{u\in\mathbb{R}^{L},v \in\mathbb{R}^{T},w\in\mathbb{R}^{S}}{\text{maximize}}
%& &  \sum_{(l,t,s)  \in M}  Y_{l,t,s}\,u_{l}\,v_{t}\,w_{s}  \\%\underset{i,j,k}{\sum}y_{i,j,k}u_{i}v_{j}w_{k}\\
%& \text{subject to}
%& &  \| D^{u}u\|_1 \,\leq\,c_u,\quad \|D^{v} v\|_1\,\leq\,c_v,\quad \|D^{w}w\|_1 \,\leq\,c_w\\
%& & &  u^{T}u \,\leq\,1, \quad v^{T}v \,\leq\,1, \quad w^{T}w \,\leq\,1.
%\end{aligned}
\end{equation}

Note that (\ref{rank_1_problem3}) has a multilinear objective function in $u$,  $v$, and $w$.  Since the penalties induced by $D^{u}$, $D^{v}$  and $D^{w}$ are convex, we can use coordinate-wise optimization in order to solve this problem. For example, when $v$ and $w$ are fixed, the update for $u$ is found by solving the following problem:
\begin{equation}
\label{updating_u}
\underset{u}{\text{maximize}}\;\, \left(  \underline{Y} \times_{2} v \times_{3} w\right)^{T}u\qquad \text{subject to}\quad\ \|u\|_{2}^{2}\,\leq1\,, \; \|D^{u}u\|_1 \leq c_{u}.
\end{equation}
It would seem that a solution to (\ref{updating_u}) would not in general have unit norm.  But it is possible to ensure that this will be the case---that is, to ensure the solution follows on the boundary of the $\ell^2$ constraint set---as long as $c_u$ is chosen properly based on the KKT conditions.  A similar phenomenon was observed for the matrix case in \cite{witten2009penalized}.  One of our results is that the solution to (\ref{updating_u}) will very often turn out to have unit norm, despite our convex relaxation. A rigorous statement of this result will be given later.

Our strategy to solve (\ref{rank_1_problem3})  is to sweep  through the vectors iteratively by proceeding with block coordinates updates. Thus starting from initials  $u^0$, $v^0$  and $w^0$, we proceed by solving, at iteration $m$, the problems shown in Algorithm \ref{alg1}.  It should be pointed out here that the best we can hope with Algorithm 1 is to obtain a local minimum to (\ref{rank_1_problem3}). It will be shown latter with  our experiments that this local minimum provides interpretable and accurate estimators. Note that while the algorithm is structurally quite simple, the individual block-coordinate updates are non-trivial to solve efficiently. The remainder of this section discusses how this can be done.

\begin{algorithm}[t]                    % enter the algorithm environment
	\caption{Constrained problem block coordinate descent}          % give the algorithm a caption
	\label{alg1}                           % and a label for \ref{} commands later in the document
	\begin{algorithmic}                    % enter the algorithmic environment
		%    \REQUIRE $n \geq 0 \vee x \neq 0$
		%    \ENSURE $y = x^n$
		%    \STATE $y \Leftarrow 1$
		%      \IF{$n < 0$}
		%        \STATE $X \Leftarrow 1 / x$
		%      \STATE $N \Leftarrow -n$
		%    \ELSE
		%      \STATE $X \Leftarrow x$
		%      \STATE $N \Leftarrow n$
		%   \ENDIF
		%  \WHILE{$N \neq 0$}
		%    \IF{$N$ is even}
		%      \STATE $X \Leftarrow X \times X$
		%   \STATE $N \Leftarrow N / 2$
		%       \ELSE[$N$ is odd]
		%        \STATE $y \Leftarrow y \times X$
		%      \STATE $N \Leftarrow N - 1$
		%       %   \ENDIF
		%  \ENDWHILE
		\STATE   \[
		\begin{array}{lll}
		u^m  & = & \underset{u}{\text{arg min} } \left\{  \left(  -\underline{Y} \times_{2} v^{m-1} \times_{3} w^{m-1}\right)^{T}u\qquad \text{subject to}\quad\ \|u\|_{2}^{2}\,\leq1\,, \; \|D^{u}u\|_1 \leq c_{u}. \right\} \\
		v^m  & = & \underset{v}{\text{arg min} } \left\{  \left(  -\underline{Y} \times_{1} u^{m} \times_{3} w^{m-1}\right)^{T}v\qquad \text{subject to}\quad\ \|v\|_{2}^{2}\,\leq1\,, \; \|D^{v}v\|_1 \leq c_{v}. \right\} \\
		w^m  & = & \underset{w}{\text{arg min} } \left\{  \left(  -\underline{Y} \times_{1} u^{m} \times_{2} v^{m}\right)^{T}w\qquad \text{subject to}\quad\ \|w\|_{2}^{2}\,\leq1\,, \; \|D^{w}w\|_1 \leq c_{w}. \right\} \\
		\end{array}
		\]
		
	\end{algorithmic}
\end{algorithm}

Given the symmetry of the problem, without loss of generality, we focus on
the update for $u$. We notice that the constraint set involves a non-differentiable function,
implying that it is not possible to use a gradient-based method.  Before describing our approach, we first discuss two natural possibilities and explain why they were ultimately rejected.

First, a simple approach is to include a slack variable $z = D^{u}u$ and use the ADMM algorithm.  However, the resulting update for $u$ would require solving a constrained problem using, for example, an interior-point method. This rapidly becomes infeasible, since it requires solving a large dense linear system.

A second natural approach is to use the novel ADMM algorithm from \cite{zhu2015augmented}  to solve each of the block-coordinate updates. For instance, the update for $u$ would involve solving the problem
\begin{equation}
\label{alternative_ADMM}
\begin{array}{llll}
u^m  & = & \underset{u}{\text{arg min} }  & \left(  -\underline{Y} \times_{2} v^{m-1} \times_{3} w^{m-1}\right)^{T}u\qquad  \\
& &  \text{subject to} &   \quad\ \|u\|_{2}^{2}\,\leq1\,, \; \|z\|_1 \leq c_{u},\,\,\, \,\,z= D^{u}u,\,\,\,\, (E_u - (D^{u})^TD^{u})^{1/2}  u = \tilde{z} \, , 
\end{array}
\end{equation}
where $E_u$ is a matrix such that $E_u \succeq  (D^{u})^TD^{u} $. Then proceeding as in  \cite{zhu2015augmented}, we observe that (\ref{alternative_ADMM})  can be solved in linear time, as the update for $u$ is a simple projection on the unit $\ell_2$ ball, while the update for $z$ requires projecting in a $\ell_1$ ball with the algorithm  from \cite{duchi2008efficient}.  (The actual updates for our problem are given in the appendix.)  However, while this algorithm indeed solves the constrained-problem updates, we find in that practice the ADMM routine requires a long time to converge. In particular, it presents problems enforcing the constraint that $\|D^u u^m\|_1 \leq c_u$, so that the solution returned after reasonable runtimes is actually quite far from the feasible region.

This motivates us to consider a different approach to solve the block-coordinate updates in (\ref{alg1}). We appeal to the following theorem, which suggests a simple method and also implies that, typically, the solution lies on the boundary of the unit ball. That is, it satisfies the non-convex constraint of problem (\ref{rank_1_problem}), despite our relaxation.

\begin{theorem}
	\label{main_theorem}
	Assume that $c_u$ $>$ $0$ and  $\underline{Y} \times_{2} v \times_{3} w \notin Range\left((D^{u})^{T}\right)$. Then the solution
	to (\ref{updating_u}) is given by
	\begin{equation}
	\label{solution}
	u^{*}=\frac{\left(-\underline{Y} \times_{2} v \times_{3} w   -(D^{u})^{T}\hat{\gamma}_{\lambda^{*}}\right)}{\|-\underline{Y} \times_{2} v \times_{3} w - (D^{u})^{T}\hat{\gamma}_{\lambda^{*}}\|_{2}}
	\end{equation}
	where
	\begin{eqnarray}
	\label{eqn:dual_argument}
	\hat{\gamma}_{\lambda} &=& \underset{\| \gamma\| _{\infty}\leq\lambda}{\arg\min}\quad\frac{1}{2}\|  -\underline{Y} \times_{2} v \times_{3} w-(D^{u})^{T}\gamma\| _{2}^{2} \\
	\lambda^{*} &=& \underset{0\leq\lambda}{\arg\min}\left[\|-\underline{Y} \times_{2} v \times_{3} w -(D^{u})^{T}\hat{\gamma}_{\lambda}\| _{2}+\lambda c_{u}\right].
	\end{eqnarray}
\end{theorem}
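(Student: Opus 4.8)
Write $b=\underline{Y}\times_2 v\times_3 w$ and $D=D^{u}$, so that \eqref{updating_u} maximizes the linear functional $b^{T}u$ over the convex set $\{\|u\|_2\le 1\}\cap\{\|Du\|_1\le c_u\}$. Since the objective is linear and both constraints are convex, this is a convex program, and because $u=0$ is strictly feasible when $c_u>0$, Slater's condition holds and strong duality is available. My strategy is to dualize only the $\ell_1$ constraint, keeping the $\ell_2$ ball explicit, and to recognize the resulting inner problem as a generalized-lasso dual. Concretely, for a multiplier $\lambda\ge 0$ I would form $g(\lambda)=\max_{\|u\|_2\le 1}\bigl[b^{T}u-\lambda\|Du\|_1\bigr]+\lambda c_u$, so that the optimal value equals $\min_{\lambda\ge 0}g(\lambda)$.

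To evaluate the inner maximization I would use the dual-norm identity $\lambda\|Du\|_1=\max_{\|\gamma\|_\infty\le\lambda}\gamma^{T}Du$, which rewrites the bracketed term as $\min_{\|\gamma\|_\infty\le\lambda}(b-D^{T}\gamma)^{T}u$. The objective is bilinear in $(u,\gamma)$ over compact convex sets, so Sion's minimax theorem lets me exchange the two operations and then use $\max_{\|u\|_2\le 1}c^{T}u=\|c\|_2$ to obtain $g(\lambda)=\min_{\|\gamma\|_\infty\le\lambda}\|b-D^{T}\gamma\|_2+\lambda c_u$. The inner minimizer is exactly $\hat{\gamma}_{\lambda}$ of \eqref{eqn:dual_argument} (in the paper's sign convention relating $b$ and $-\underline{Y}\times_2 v\times_3 w$), because minimizing $\|b-D^{T}\gamma\|_2$ and $\tfrac12\|b-D^{T}\gamma\|_2^2$ over the same box yields the same argument; this is precisely the generalized-lasso dual \citep{tibshirani2011solution}, which is what links the update to the solution-path machinery. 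The associated inner maximizer is the normalized residual $u_\lambda=(b-D^{T}\hat{\gamma}_{\lambda})/\|b-D^{T}\hat{\gamma}_{\lambda}\|_2$, and rewriting in the sign convention of the theorem yields \eqref{solution}.

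It is precisely here that the hypothesis $\underline{Y}\times_2 v\times_3 w\notin\mathrm{Range}(D^{T})$ does its work: it forces $b-D^{T}\gamma\ne 0$ for every $\gamma$, so the residual never vanishes, the normalization is legitimate, the maximizer $u_\lambda$ is unique, and $\|u_\lambda\|_2=1$. This is the mechanism by which the relaxed $\ell_2$ constraint is active at the optimum, so the solution lies on the boundary of the unit ball and hence satisfies the original non-convex constraint of \eqref{rank_1_problem}. Next I would minimize the convex function $g(\lambda)=\|b-D^{T}\hat{\gamma}_{\lambda}\|_2+\lambda c_u$ over $\lambda\ge 0$ to obtain $\lambda^{*}$. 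To certify optimality I would invoke Danskin's theorem on the inner maximization: since $\lambda$ enters $b^{T}u-\lambda\|Du\|_1$ linearly and $u_\lambda$ is the unique maximizer, $g$ is differentiable with $g'(\lambda)=c_u-\|Du_\lambda\|_1$. The stationarity condition $g'(\lambda^{*})=0$ then reads $\|Du_{\lambda^{*}}\|_1=c_u$, which delivers primal feasibility together with complementary slackness; the boundary case $\lambda^{*}=0$ corresponds to an inactive, already-feasible $\ell_1$ constraint. Feeding the triple $(u_{\lambda^{*}},\hat{\gamma}_{\lambda^{*}},\lambda^{*})$ back as a saddle point of the Lagrangian certifies that $u^{*}=u_{\lambda^{*}}$ solves \eqref{updating_u}.

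I expect the main obstacle to be the recovery step rather than the minimax manipulation. Strong duality and Sion's theorem are routine given compactness and Slater, and the dual-norm rewriting is mechanical. The delicate points are, first, establishing the envelope identity $g'(\lambda)=c_u-\|Du_\lambda\|_1$ cleanly enough to pin down $\lambda^{*}$ and complementary slackness, which depends on uniqueness of $u_\lambda$ and hence again on the $\mathrm{Range}$ assumption through non-vanishing of the residual; and second, verifying that the KKT conditions of the box-constrained least-squares problem defining $\hat{\gamma}_{\lambda}$ coincide with the subgradient-alignment conditions on $(u_\lambda,\hat{\gamma}_{\lambda})$ required for Lagrange stationarity of the original problem, so that the dual optimizer genuinely reconstructs a primal optimum. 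Finally, careful bookkeeping of the signs linking $b$, $-b$, and the dual variable $\gamma$ will be needed to land exactly on the expression in \eqref{solution}.
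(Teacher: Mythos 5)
Your proposal is correct, but it follows a genuinely different duality route from the paper's. The paper dualizes \emph{both} constraints of \eqref{updating_u}, introducing a multiplier $\lambda$ for the $\ell_1$ constraint and a second multiplier $\mu$ for the $\ell_2$ ball; it then computes the dual function in closed form via a slack variable $z=Du$ (a case analysis over $\lambda,\mu$ shows the dual is $-\infty$ or degenerate unless $\mu>0$, which is where the assumption $x\notin\mathrm{Range}(D^{T})$ enters), arriving at $L(\lambda,\mu)=-\tfrac{1}{4\mu}\|x-D^{T}\hat{\gamma}_{\lambda}\|_{2}^{2}-\mu-\lambda c_{u}$ with $\hat{\gamma}_{\lambda}$ exactly the box-constrained least-squares problem \eqref{eqn:dual_argument}; maximizing analytically over $\mu$ is what produces the outer objective $\|x-D^{T}\hat{\gamma}_{\lambda}\|_{2}+\lambda c_{u}$ defining $\lambda^{*}$, and the primal solution is recovered from the KKT stationarity condition $0=-x+2\mu^{*}u^{*}+D^{T}\hat{\gamma}_{\lambda^{*}}$ of the strictly convex ($\mu^{*}>0$) inner problem, whose normalization yields \eqref{solution}. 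You instead keep the $\ell_2$ ball as a hard constraint, dualize only the $\ell_1$ constraint, and obtain the same reduced dual by the dual-norm identity plus Sion's minimax theorem, with Danskin's envelope formula $g'(\lambda)=c_u-\|Du_{\lambda}\|_1$ supplying complementary slackness in place of the KKT recovery. Both arguments use the range assumption at the analogous juncture (your nonvanishing residual plays the role of the paper's $\mu^{*}>0$), and both land on the Tibshirani--Taylor dual, so the algorithmic consequence is identical. What your route buys: $\mu$ never appears, the unit-norm property of the solution is immediate (you maximize a nonzero linear functional over the ball), and the square root in $\lambda^{*}$'s objective needs no explanation, whereas in the paper it emerges from optimizing $-a^{2}/(4\mu)-\mu$ over $\mu$. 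What the paper's route buys: it is more elementary, avoiding Sion and Danskin in favor of direct computation and KKT. One point you correctly flag but should nail down if you write this out: Danskin requires the maximizer $u_{\lambda}$ of the inner problem $\max_{\|u\|_2\le 1}\left[b^{T}u-\lambda\|Du\|_1\right]$ to be unique, and the nonvanishing residual only directly gives uniqueness of the best response to $\hat{\gamma}_{\lambda}$; you need the standard fact that saddle points of a bilinear convex-concave game over compact convex sets form a product set, so that \emph{every} inner maximizer pairs with $\hat{\gamma}_{\lambda}$ and hence coincides with the normalized residual.
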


As a direct consequence of the proof of Theorem \ref{main_theorem}, we can solve (\ref{updating_u}) by first solving (\ref{eqn:dual_argument}) with the solution-path algorithm from \cite{tibshirani2011solution}, then finding $\lambda^*$ and finally $u^{*}$. The explicit algorithm is given in the appendix.

Unfortunately, there  is no characterization available of of the computational
time to compute the solution path. It is only known the cost at each iteration is $O(L)$ in
its worst case, but it is unknown how many kinks $K$ that a particular problem will have.
Moreover, we notice that after the solution path is computed, the next two steps require
$O(K L)$ cost. Therefore, the total cost for updating $u$ is $O(K L)$. 
%However, as we will see in the convergence analysis section, unless $\|D^u u\|_1$, $\|D^v v\|_1$ and $\|D^w w\|_1$ are known, it is necessary to consider cross validation to set the parameters $c_u$,$c_v$ and $c_w$. This is similar to the case of matrix decomposition, as in \cite{witten2009penalized}.

\subsection{Unconstrained version}

The framework we have introduced for rank-1 approximations has some nice features.  In particular, the choice of tuning parameters is more intuitive, since this directly imposes a constraint on the smoothness of the solutions. However, the optimization routine derived from Theorem \ref{main_theorem} is computationally intensive. In particular, for large dimensions of the penalty matrices, computing the entire solution path can still be somewhat slow. To avoid this, we revisit (\ref{rank_1_problem2}) and consider a problem equivalent to its convex relaxation:
% consider  an unconstrained version of \ref{rank_1_problem2}. Specifically, we solve
\begin{equation}
\label{rank_1_problem4}
\begin{aligned}
& \underset{u\in\mathbb{R}^{L},v \in\mathbb{R}^{T},w\in\mathbb{R}^{S}}{\text{minimize}}
& & -\underline{Y} \times_{1} u \times_{2} v \times_{3} w + \lambda_u\, \|D^{u} u\|_1 + \lambda_v\, \|D^{v} v\|_1  + \lambda_w\, \|D^{w} w\|_1 \\%\underset{i,j,k}{\sum}y_{i,j,k}u_{i}v_{j}w_{k}\\
& \text{subject to}
& & u^{T}u \,\leq\,1, \quad v^{T}v \,\leq\,1, \quad w^{T}w \,\leq\,1 \, .
\end{aligned}
\end{equation}
As in the constrained case, we solve (\ref{rank_1_problem4})  via block-coordinate updates. Now the update for $u$  is obtained by solving
\begin{equation}
\label{updating_u2}
\underset{u}{\text{minimizing}}\;\,\, -\left(  \underline{Y} \times_{2} v \times_{3} w\right)^{T}u  +  \lambda_u\|D^{u}u\|_1 \qquad \text{subject to}\quad\ \|u\|_{2}^{2}\,\leq1\,.
\end{equation}

The solution to (\ref{updating_u2}) can be characterized in the same manner as for the constrained case. In fact, the proof of Theorem \ref{main_theorem} implies the following corollary:

\begin{corollary}
	\label{main_corolary}
	With the notation and assumptions from Theorem (\ref{main_theorem}), the solution to
	\begin{equation}
	\label{problem_for_corolary}
	\underset{u\in\mathbb{R}^{S}}{\text{minimize}}  -\left(  \underline{Y} \times_{2} v \times_{3} w\right)^{T}u  + \lambda\,\|D^{u}u\|_1\,\,\,\,\,   \text{subject to} \,\,\,  \|u\|_{2}^{2} \leq 1 
	%\begin{aligned}
	%& \underset{u\in\mathbb{R}^{S}}{\text{minimize}}
	%& & -\left(  \underline{Y} \times_{2} v \times_{3} w\right)^{T}u  + \lambda\,\|D^{u}u\|_1 \\
	%& \text{subject to} &&  \|u\|_{2}^{2} \leq 1
	%\end{aligned}
	\end{equation}
	has the following form, where $\hat{\gamma}_{\lambda}$ is defined in (\ref{eqn:dual_argument}):
	\begin{equation}
	\label{solution}
	u^{*}=\frac{\left( -\left(  \underline{Y} \times_{2} v \times_{3} w\right)^{T}-(D^{u})^{T}\hat{\gamma}_{\lambda}\right)}{\|   -\left(  \underline{Y} \times_{2} v \times_{3} w\right)^{T}-(D^{u})^{T}\hat{\gamma}_{\lambda}\|_{2}} \, .
	\end{equation}
	
\end{corollary}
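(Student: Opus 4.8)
Abbreviate $b := \underline{Y}\times_2 v\times_3 w$, so that the objective in (\ref{problem_for_corolary}) is $-b^{T}u + \lambda\,\|D^{u}u\|_1$ minimized over the unit Euclidean ball. The key observation is that (\ref{problem_for_corolary}) is nothing but the fixed-multiplier subproblem that already appears inside the proof of Theorem \ref{main_theorem}: there the constraint $\|D^{u}u\|_1\le c_u$ is dualized with a multiplier $\lambda$, the penalized problem is solved for each fixed $\lambda$, and only afterwards is $\lambda$ optimized to the value $\lambda^{*}$ that activates the constraint. The plan is therefore to reproduce that intermediate step, now with $\lambda$ prescribed rather than chosen, and to read off (\ref{solution}) from it.

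Concretely, I would first replace the nonsmooth penalty by its support-function representation, $\lambda\,\|D^{u}u\|_1 = \max_{\|\gamma\|_\infty\le\lambda}\gamma^{T}D^{u}u$, turning (\ref{problem_for_corolary}) into the saddle problem $\min_{\|u\|_2\le1}\max_{\|\gamma\|_\infty\le\lambda}\bigl(-b+(D^{u})^{T}\gamma\bigr)^{T}u$. The integrand is bilinear in $(u,\gamma)$ and both feasible sets are nonempty, compact and convex, so Sion's minimax theorem lets me exchange the two operations. After the exchange the inner problem is a linear optimization of $\bigl(-b+(D^{u})^{T}\gamma\bigr)^{T}u$ over the unit ball, whose optimal value is the Euclidean norm of the coefficient and whose optimizer is that coefficient, suitably normalized. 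The outer problem that remains is then exactly the projection problem (\ref{eqn:dual_argument}) --- minimizing $\tfrac12\|{-b-(D^{u})^{T}\gamma}\|_2^{2}$ over the box is equivalent to minimizing the norm, since $t\mapsto t^{2}$ is increasing on $[0,\infty)$ --- and its minimizer is $\hat{\gamma}_\lambda$. Substituting $\hat{\gamma}_\lambda$ back into the normalized coefficient reproduces the closed form (\ref{solution}), with the sign convention fixed exactly as in the theorem.

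The step I expect to require the most care is not the minimax exchange (which is routine once compactness, convexity and bilinearity are noted) but the well-posedness of the final normalization. Writing $u^{*}$ as a normalized vector presupposes that $-b-(D^{u})^{T}\hat{\gamma}_\lambda$ does not vanish, and this is precisely what the hypothesis $b = \underline{Y}\times_2 v\times_3 w\notin\mathrm{Range}\bigl((D^{u})^{T}\bigr)$ inherited from Theorem \ref{main_theorem} guarantees: no admissible $\gamma$ can cancel the component of $b$ orthogonal to the column space of $(D^{u})^{T}$, so the coefficient stays bounded away from zero over the whole box $\|\gamma\|_\infty\le\lambda$. The same non-vanishing shows that the optimal $u^{*}$ saturates $\|u\|_2=1$, i.e.\ the convex relaxation returns a point on the boundary of the $\ell_2$-ball, as anticipated in the discussion preceding the theorem. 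Finally, I would note the one genuine simplification relative to the theorem: because $\lambda$ is given, there is no outer minimization over $\lambda$ and hence no analogue of $\lambda^{*}$ nor any need to match a prescribed budget $c_u$, so the argument terminates at $\hat{\gamma}_\lambda$.
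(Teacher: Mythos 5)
Your proposal is correct, and it reaches the corollary by a route that is genuinely different in mechanics from the paper's, though the two meet at the same dual problem. The paper gives no separate proof: the corollary is extracted from the proof of Theorem \ref{main_theorem}, which runs full Lagrangian duality --- the $\ell_2$-ball constraint is dualized with a multiplier $\mu$, a slack variable $z = D^{u}u$ is introduced, the inner Lagrangian is minimized separately in $u$ and $z$ (producing the value $-\frac{1}{4\mu}\|x-(D^{u})^{T}\gamma\|_{2}^{2}$ and the box constraint $\|\gamma\|_{\infty}\leq\lambda$), and the closed form for $u^{*}$ is then read off from the KKT stationarity system, using strict convexity of the penalized objective when $\mu^{*}>0$; for the corollary one simply freezes $\lambda$ and skips the outer optimization. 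You instead keep the ball constraint explicit, replace $\lambda\|D^{u}u\|_{1}$ by its support function over the box, and exchange $\min$ and $\max$ via Sion's theorem on the compact bilinear saddle problem; the inner ball-constrained linear problem is solved in closed form and the outer problem collapses to the projection problem (\ref{eqn:dual_argument}). What your route buys is the elimination of $\mu$ and of the KKT bookkeeping, with the boundary saturation $\|u^{*}\|_{2}=1$ falling out immediately; what the paper's route buys is that the same Lagrangian machinery is needed anyway for the full theorem (where $\lambda$ must still be tuned against the budget $c_{u}$), so the corollary there is a free byproduct. Two points deserve slightly more care than your sketch gives them. First, Sion's theorem only equates values; to identify the primal optimizer you should note that compactness of both sets plus bilinearity yields an actual saddle point $(u^{*},\gamma^{*})$, and that since $-b+(D^{u})^{T}\gamma^{*}\neq 0$ (your range argument, which is exactly right), the inner linear problem has a unique minimizer, forcing every primal solution into the stated normalized form. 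Second, carried out literally your computation gives $u^{*}=\bigl(b-(D^{u})^{T}\gamma^{*}\bigr)/\|b-(D^{u})^{T}\gamma^{*}\|_{2}$ with $\gamma^{*}$ minimizing $\|b-(D^{u})^{T}\gamma\|_{2}$ over the box --- identical to the appendix's formula $u^{*}=(x-D^{T}\hat{\gamma})/\|x-D^{T}\hat{\gamma}\|_{2}$ --- and this matches the displayed corollary only after the sign flip $\gamma\mapsto-\gamma$ permitted by the symmetric box; the residual overall sign discrepancy sits in the paper's own statement (test $\lambda=0$: the minimizer of $-b^{T}u$ over the ball is $+b/\|b\|_{2}$, not $-b/\|b\|_{2}$), so your remark that the sign convention is ``fixed exactly as in the theorem'' inherits the paper's sloppiness rather than introducing a gap of your own.
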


An interesting consequence of  the closed-form formula (\ref{solution}), and the proof of Theorem \ref{main_theorem}, is that we can solve (\ref{updating_u2}) by  first solving a generalized lasso problem and then projecting the solution into the unit ball. Explicitly, we first find
\begin{equation}
\label{ADMM_update}
\hat{u} = \underset{u\in\mathbb{R}^{L}}{\text{arg min}}\left\{ \|u -  \underline{Y} \times_{2} v \times_{3} w \|_{2}^{2}  + \lambda\,\|D^{u}u\|_1,\,\,\,,\,\text{subject to} \,\,\,  \|u\|_{2}^{2} \leq 1\right\}
%\begin{aligned}
%\hat{u} &=&& \underset{u\in\mathbb{R}^{L}}{\text{arg min}}
%& & \|u +  \underline{Y} \times_{2} v \times_{3} w \|_{2}^{2}  + \lambda\,\|D^{u}u\|_1 \\
%&&& \text{subject to} &&  \|u\|_{2}^{2} \leq 1
%\end{aligned}
\end{equation}
and  $\hat{u}/\|\hat{u}\|_2$ becomes the solution to (\ref{updating_u2}). Therefore, for trend-filtering problems, we can solve the regression problem step with the fast ADMM algorithm from  \cite{ramdas2014fast}. Moreover, for the case of a fused lasso penalty, the update for $u$  can be done in linear time \citep{johnson2013dynamic}. Because these two algorithms are so efficient, the penalized formulation from (\ref{rank_1_problem4}) can be solved much more cheaply than the constrained formulation from (\ref{rank_1_problem3}).

\subsection{A toy example}

We illustrate the advantage of problem (\ref{rank_1_problem4}) over the formulation
from (\ref{rank_1_problem3}) using a toy example. We consider $ u^* \in  \mathbb{R}^{10}$ and $w^* \in \mathbb{R}^{400}$ as the size of $v^*$ varies. Here, $u^*$ and
$w^*$ are as in Structure 2 in Figure \ref{fig:structure_examples},  while $v^*$ is the function $\cos(9 \,\pi\,t)$ evaluated at evenly
spaced locations in $[0, 1]$.  Taking initial values from the power method, we compare the solutions
from one iteration of the unconstrained formulation when choosing the penalty parameters adaptively,
versus an ``oracle'' version of the constrained problem with $(c_u,c_v, c_w) = (\|D^{u}u^*\|_1, \|D^{v}v^*\|_1, \|D^{w}w^*\|_1)$.  This choice of hyperparameters for the constrained problem is obviously optimal, but requires knowledge of the true factors, and is therefore unrealistic in practice.

Figure \ref{time} demonstrates the favorable trade-off offered by the unconstrained formulation with adaptively chosen tuning parameters. We observe that while the constrained formulation algorithm based on the solution-path computation is the most accurate, the unconstrained formulation is competitive in terms of reconstruction error, and much more efficient. The ADMM algorithm based on \cite{zhu2015augmented} is substantially less accurate than the other two methods.

% Particularly attractive is the efficiency brought by the fast ADMM algorithm from  \cite{ramdas2014fast}, given that each step can be done in linear time. If each problem requires fewer iterations than the total number of kinks in
%the solution path algorithm from the previous section, then it is expected to reduce the
%computational effort. 

Moreover, in practice it would be necessary to solve the constrained
problem with more than one value of the tuning parameters, since we do not know $\|D^{u}u^*\|_1$.   Hence the penalized version is strongly preferred: we can do adaptive parameter choice more cheaply than solving the constrained version for a single hyperparameter setting, without a major loss of performance even under an optimal hyperparameter choice.

\begin{figure}[t]
	\subfigure[]{\includegraphics[width=7.5cm]{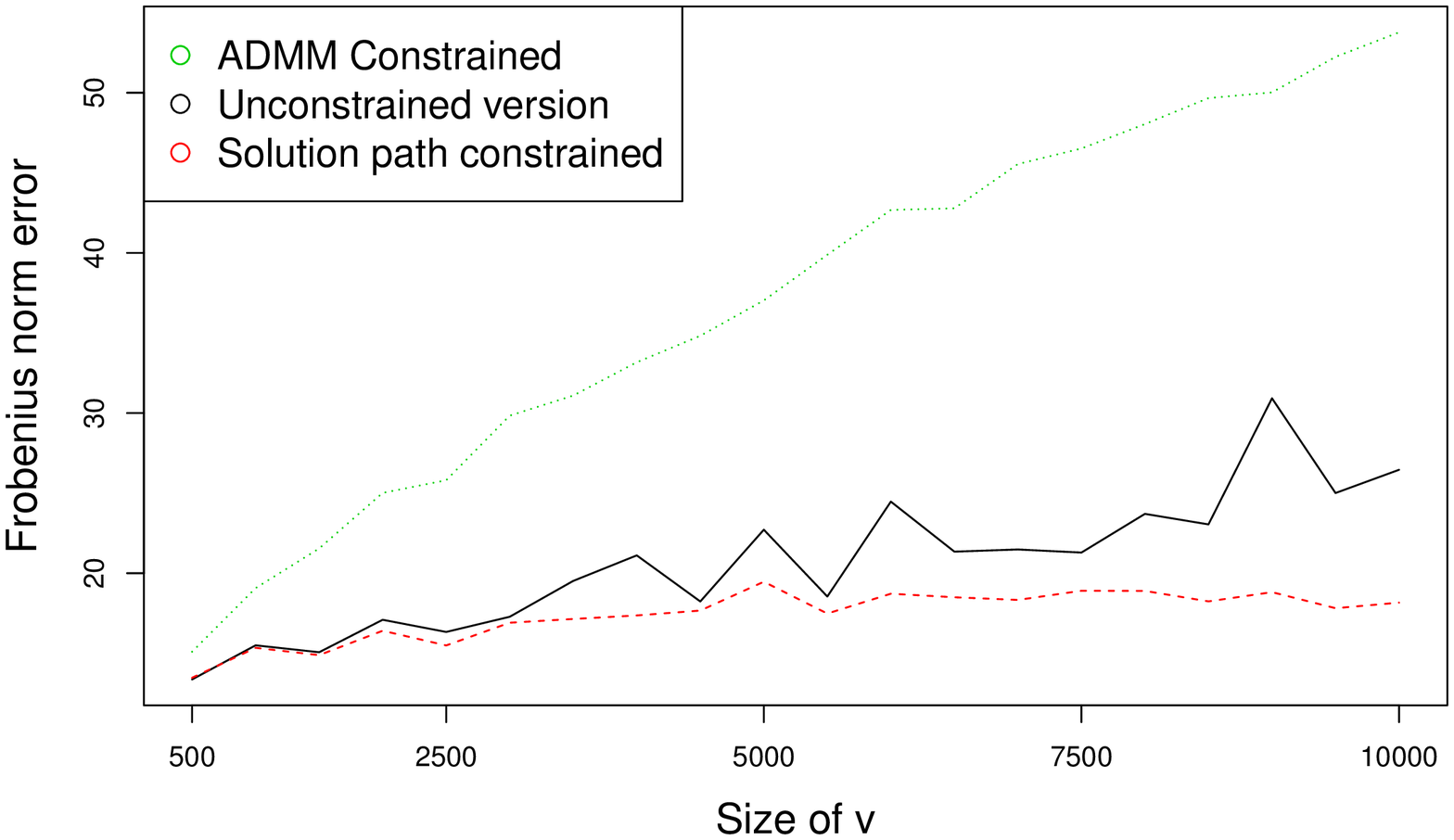}\hspace{5px}}
	\subfigure[]{\includegraphics[width=7.5cm]{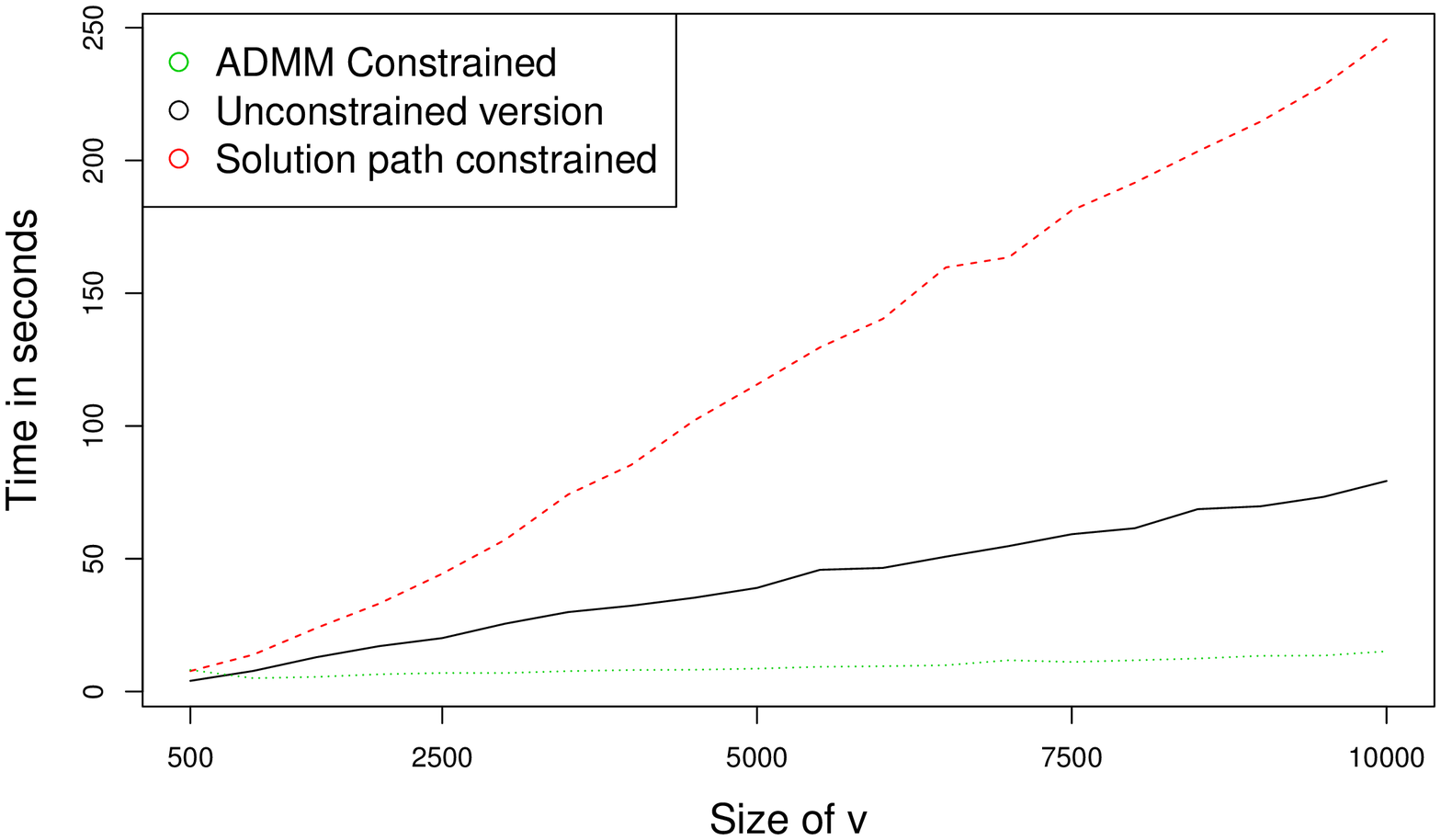}  \hspace{10px}}\\
	%\subfigure[]{\includegraphics[width=8cm]{time.eps} \hspace{5px}}
	%\subfigure[]{\includegraphics[width=8cm]{error.eps} \hspace{10px}}\\
	\caption{\label{time} Panel (a): Frobenius error comparison of the  of three different methods for finding a rank-1 decomposition. These are: Algorithm \ref{alg1} with the ADMM method from \cite{zhu2015augmented}, block coordinate descent for solving the unconstrained problem (\ref{rank_1_problem4}), and  Algorithm \ref{alg1} using the solution path method as described in Section 3.1.  Panel (b): For each of the methods, time in seconds for solving one problem with a particular choice of tuning parameters.   Our unconstrained formulation with adaptive chosen penalties achieves nearly the reconstruction error of the unconstrained formulation with optimal hyperparameter choice, but at far less computational cost.}
\end{figure}

With regards to the choice of regularization, we can consider two alternatives based on
cross validation. The first of these follows \cite{witten2009penalized}. This procedure involves randomly deleting a percentage of the input data and solves the problem on the resulting tensor. The estimated
tensor produces predicted values on the deleted entries, allowing one to compute mean square
error of prediction for these notionally missing values. The parameters $\lambda_u$, $\lambda_v$  and $\lambda_w$ are then chosen to minimize the prediction
error. This is particularly attractive when multiple processors are available, given that
independent problems with different tuning parameters can be solved in parallel.

The other alternative for cross validation applies to (\ref{rank_1_problem4}) and it is based on adaptively choosing the tuning
parameters. Thus, before estimating each vector (say $u$), we obtain a generalized lasso
regression problem and hence we can choose $\lambda_u$ by cross validation. We randomly separate
the coordinates of the response vector into training and test set, solving the problem in the
training set and computing the mean squared error of the predicted solution on the test set.
This exploits the fact that $u$ is a smooth function, and therefore given a solution based on the training set, we can provide
estimates at the locations in the test set by interpolation.

\subsection{Multiple factors }
\label{multiple_factors}

In the case of multiple factors, the main difference of the tensor case versus the matrix case is that we must find all the factors jointly \citep{kolda2009tensor}, as opposed to estimating factor $k+1$ using the residual from the fitted $k$-factor model. Fortunately, it is straightforward to use any of the algorithms in the previous section to handle multiple factors. Hence, to estimate the factors in (\ref{the_model}), we state the problem
\begin{equation}
\label{multiple_factor_2}
\begin{aligned}
& \underset{u_j, v_j, w_j  }{\text{minimize}}
& & \|\underline{Y}- \sum_{j =1}^J d_j\,u_j\circ v_j \circ w_j \|_{\text{F}}^{2} + \sum_{j=1}^{J} \left[ \lambda_{u,j} \|D_j^u\,u_j\|_{1}  +  \lambda_{v,j} \|D_j^v\,v_j\|_{1} +  \lambda_{w,j} \|D_j^w\,w_j\|_{1}  \right]   \\
& \text{subject to} &&  \|u_j\|_{2}^{2} \leq 1 \quad \|v_j\|_{2}^{2} \leq 1 \quad \|w_j\|_{2}^{2} \leq 1 \quad j = 1,...,J,  \\
%&&&  &&  \|D_j^u\,u_j\|_{1} \leq c_j^u \quad  \|D_j^v\,v_j\|_{1} \leq c_j^v  \quad \|D_j^w\,w_j\|_{1} \leq c_j^w   \quad j = 1,...,J  \\
\end{aligned}
\end{equation}
%then, we can proceed with block coordinate descent using the solution path (or fast ADMM) since the updates for $u_j$, $v_j$ and $w_j$ can still be writen in the form of Theorem \ref{main_theorem} (Corollary \ref{main_corlary}).
where the matrices $D_j^{u}$,$D_j^{v}$  and $D_j^{w}$  are  chosen to capture different structural features desired for the solutions. Here, $\lambda_{u,j}$, $\lambda_{v,j}$  and $\lambda_{w,j}$ are tuning parameters. Now we solve (\ref{multiple_factor_2}) by starting with initial guesses $\{u^{j}\}$,  $\{v^{j}\}$,  $\{w^{j}\}$, $\{d^{j}\} $ and applying the iterative updates listed in Algorithm \ref{alg2} exploiting results from Section 3.2.

\begin{algorithm}[t]                      % enter the algorithm environment
	\caption{Multiple factors}          % give the algorithm a caption
	\label{alg2}                           % and a label for \ref{} commands later in the document
	\begin{algorithmic}                    % enter the algorithmic environment
		\STATE   Loop for $j_0 =1:J$,
		\[
		\begin{array}{lll}
		u^{j_0} & \leftarrow & \underset{ \|u\|_{2}^{2} \leq 1}{\text{arg min}} \,\,   \left\|u -\,  \underline{Y}\times_2 v^{j_0}\times_3 w^{j_0} + \sum_{j \neq j_0} d^j\,(v^{j_0})^T\,v^j\,(w^{j_0})^T\,w_j\,u^j \right\|_{2}^{2}  +  \lambda_{u,j_0}\,\|D_{j_0}^u\,u \|_{1}, \\
		v^{j_0} &  \leftarrow & \underset{ \|v\|_{2}^{2} \leq 1}{\text{arg min}} \,\,   \left\|v -\,  \underline{Y}\times_1 u^{j_0}\times_3 w^{j_0} + \sum_{j \neq j_0} d^j\,(u^{j_0})^T\,u^j\,(w^{j_0})^T\,w_j\,v^j \right\|_{2}^{2}  +  \lambda_{v,j_0}\,\|D_{j_0}^v\,v \|_{1},  \\
		w^{j_0} &  \leftarrow & \underset{ \|w\|_{2}^{2} \leq 1}{\text{arg min}} \,\ \left\|w -\,  \underline{Y}\times_2 u^{j_0}\times_3 v^{j_0} + \sum_{j \neq j_0} d^j\,(u^{j_0})^T\,u^j\,(v^{j_0})^T\,v_j\,w^j \right\|_{2}^{2}  +  \lambda_{w,j_0}\,\|D_{j_0}^w\,w \|_{1}. \\
		d^{j_0} &  \leftarrow & \underline{Y}\times_1\,u^{j_0}\times_2\,v^{j_0}\times_3\,w^{j_0}  - \sum_{j \neq j_0} d^j\,(u^{j_0})^T\,u^j\,(v^{j_0})^T\,v^{j}\,(w^{j_0})^T\,w^j.
		\end{array}
		\]
		\STATE End loop
	\end{algorithmic}
\end{algorithm}

In practice the number of latent factors can be chosen with an ad-hoc rule by looking at the proportion of the variance explained (as with a scree plot in ordinary PCA). One can look at the solutions provided by different values of $J$. The choice of $J$ then corresponds to the number factors such that the increase in variance explained obtained by solving the problem with more factors is negligible. We illustrate this in our real data example.

Finally, in situations where the number of factors is large, the number of possible combinations of tuning parameters becomes challenging. One possibility to address this is to choose the parameters adaptively as discussed in Section 3.2. Hence, every time a factor is to be updated we select the parameter from a small grid of values. This ensures that, for instance, when dealing with fused lasso penalties each block coordinated update can be done in linear time. On the other hand, a different alternative is to use the same penalty parameter for all the vectors corresponding to the same level of smoothness. For instance, one can use $\lambda_u,j = \lambda_{u,i}$ if $D^{u_j} = D^{u_i}$.  This reduces the burden of cross-validation.

\section{Convergence  analysis}

We now examine the convergence of the block-coordinate algorithms developed in the previous section. Here, we assume that $J=1$ in Model (\ref{the_model}). In this case we recall that the underlying true tensor can be decomposed as the outer product of vectors $u^{*} \in \mathbb{R}^{L},$ $v^{*} \in \mathbb{R}^{T}$ and $w^{*} \in \mathbb{R}^{S},$ times a constant $d^{*}$. Moreover, we assume that the  matrices $D$ are chosen to be either fused lasso or trend filtering penalties. Thus, $D^{u} =  D^{(k_u+1)} \in \mathbb{R}^{(L - k_u)\times L}$,   $D^{v} =  D^{(k_v + 1)} \in \mathbb{R}^{(T - k_v)\times T}$  and $D^{w} =  D^{(k_w+1)} \in \mathbb{R}^{(S - k_w)\times S}$ with $k_u,k_v$  and $k_w$ $\in $ $\{0,1\}$.

Our proof is inspired by  the work on convergence rates for generalized lasso regression problems from  \cite{wang2014trend}. The theorem states that, when starting with good initials, it is necessary to sweep through the data only once. The proof of the claim is based on the identity 
\[
\text{P}(A \cap B \cap C) =  \text{P}(A)  \text{P}(B \mid A)  \text{P}(C \mid A \cap B)
\]
for any events $A$, $B$  and $C$. A related statement can be made in the case of multiple factors for a single update depending on the other factors.  See the result in the appendix; the main difference there involves an error  measurement that depends on the factors taken as fixed.
%Hence, exploiting recent work on convergence rates for generalized lasso problems \citep{wang2014trend},  we take these events to be bounds on error of the block-coordinate updates given in the first iteration of Algorithm \ref{alg1}. The result loosely states that a single iteration from reasonable initial values will be close to the vectors of the true model with high probability
%all the updates of such first iteration will be close to the vectors of the true model with high probability
\begin{theorem}
	\label{convergence_rates}
	Let $\{u^{1},v^{1},w^{1}\}$ denote a one-step update from Algorithm \ref{alg1}, based on initial values $\{u^{0},v^{0},w^{0}\}$, and assume that  $\|D^{u} u^{*}\|_1 \leq c_w$, $\|D^{v} v^{*}\|_1 \leq c_w$, and $\|D^{w}w^{*}\|_1 \leq c_w$.  Then,  there exists a constant $c > 0$  such that if  $t >0 $ satisfies
	\[
	\max\left\{\frac{c\,t}{d^{*}\sqrt{L}}+ \frac{2\,c_u\,L^{k_u +1/2}}{d^{*}}, \frac{c\,t}{d^{*}\sqrt{T}}+ \frac{2\,c_v\,T^{k_v +1/2}}{d^{*}}, \frac{c\,t}{d^{*}\sqrt{S}}+ \frac{2\,c_w\,S^{k_w +1/2}}{d^{*}}  \right\} \leq \frac{1}{2^{5}},
	\]
	and 

	\[
	\|v^{0} - v^{*}\|_2 <  2^{-1/2},\, \|w^{0} - w^{*}\|_2 <    2^{-1/2},
	\]	
	then
	\[
	\begin{array}{l}
	\text{P}\Big( \|u^{1}-u^{*}\|_2^2 \leq 16\left( \frac{c\,t}{d^{*}\sqrt{L}}+ \frac{2\,c_u\,L^{k_u +1/2}}{d^{*}} \right), \|v^{1}-v^{*}\|_2^2 \leq 16\left( \frac{c\,t}{d^{*}\sqrt{T}}+ \frac{2\,c_v\,T^{k_v +1/2}}{d^{*}} \right),  \\
	\quad\quad  \|w^{1}-w^{*}\|_2^2 \leq 16\left( \frac{c\,t}{d^{*}\sqrt{S}}+ \frac{2\,c_w\,S^{k_w +1/2}}{d^{*}} \right)  \Big) \\
	\geq   \Psi(t,L)\,\Psi(t,T)\,\Psi(t,S),
	\end{array}
	\]
	where
	\[
	\Psi(t,x) =  \left(1 -   \sqrt{\frac{2}{\pi}}\frac{1}{t}e^{-\frac{t^2}{2} } - \frac{2^{1/2}}{x^{3/2}\sqrt{5\,\pi\,\log(x)}}  \right).
	\]
	
\end{theorem}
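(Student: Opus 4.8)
The plan is to reduce each block update to a constrained generalized-lasso denoising problem, import a trend-filtering error bound, and exploit the rank-one structure so that the three updates share one template. First I would expand the effective response for the $u$-update. Writing $r^{u}=\underline{Y}\times_2 v^{0}\times_3 w^{0}$ and using model (\ref{the_model}) with $J=1$, multilinearity of mode multiplication together with $\|v^{*}\|_2=\|w^{*}\|_2=1$ gives
\[
r^{u}=\alpha\, d^{*}\, u^{*}+\epsilon^{u},\qquad \alpha=(v^{*})^{T}v^{0}\,(w^{*})^{T}w^{0},\qquad \epsilon^{u}_{l}=\sum_{t,s}e_{l,t,s}\,v^{0}_{t}w^{0}_{s}.
\]
Since the $e_{l,t,s}$ are i.i.d.\ standard normal and $\|v^{0}\|_2=\|w^{0}\|_2=1$, one checks $\epsilon^{u}\sim\N(0,I_L)$, and the hypotheses $\|v^{0}-v^{*}\|_2,\|w^{0}-w^{*}\|_2<2^{-1/2}$ force $(v^{*})^{T}v^{0},(w^{*})^{T}w^{0}>3/4$, hence $\alpha>1/2$ — this is their only role. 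Because $u^{*}$ is feasible for (\ref{updating_u}) (unit norm, and $\|D^{u}u^{*}\|_1\le c_u$ by assumption) while $u^{1}$ is optimal, comparing objective values and using $1-(u^{*})^{T}u^{1}=\tfrac12\|u^{1}-u^{*}\|_2^{2}$ for unit vectors yields the basic inequality
\[
\frac{\alpha\, d^{*}}{2}\,\|u^{1}-u^{*}\|_2^{2}\;\le\;\langle \epsilon^{u},\,u^{1}-u^{*}\rangle,
\]
so everything reduces to controlling the right-hand process over the feasible difference $\delta=u^{1}-u^{*}$, which obeys $\|D^{u}\delta\|_1\le 2c_u$.

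The technical heart is to split $\epsilon^{u}$ along $\mathrm{null}(D^{u})$ and $\mathrm{row}(D^{u})$. The null component projects onto a space of dimension $k_u+1=O(1)$, so $\langle P_{\mathrm{null}}\epsilon^{u},\delta\rangle\le\|P_{\mathrm{null}}\epsilon^{u}\|_2\,\|\delta\|_2$ with $\|P_{\mathrm{null}}\epsilon^{u}\|_2$ a (near-)standard Gaussian magnitude; bounding it by $t$ produces the factor $1-\sqrt{2/\pi}\,t^{-1}e^{-t^2/2}$ in $\Psi$. The range component I would treat by duality, $\langle P_{\mathrm{row}}\epsilon^{u},\delta\rangle=\langle (D^{u+})^{T}\epsilon^{u},D^{u}\delta\rangle\le 2c_u\,\|(D^{u+})^{T}\epsilon^{u}\|_{\infty}$, and then invoke the trend-filtering machinery of \cite{wang2014trend}: the coordinates of $(D^{u+})^{T}\epsilon^{u}$ are Gaussian with variances scaling like $L^{2k_u+1}$ (inverting the $(k_u{+}1)$-th difference amounts to repeated summation), so a union bound over the $\le L$ coordinates gives $\|(D^{u+})^{T}\epsilon^{u}\|_{\infty}\lesssim L^{k_u+1/2}$ on an event of probability at least $1-\tfrac{\sqrt2}{L^{3/2}\sqrt{5\pi\log L}}$, the second deduction in $\Psi(t,L)$. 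Substituting both bounds and solving the scalar quadratic in $\|u^{1}-u^{*}\|_2$ (using $\alpha d^{*}>d^{*}/2$) delivers $\|u^{1}-u^{*}\|_2^{2}\le 16\bigl(\tfrac{ct}{d^{*}\sqrt L}+\tfrac{2c_uL^{k_u+1/2}}{d^{*}}\bigr)$, the stated rate for $u$.

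Finally I would propagate and chain. The hypothesis $\max\{\cdots\}\le 2^{-5}$ makes this bound at most $16\cdot 2^{-5}=1/2$, so a successful $u$-update forces $\|u^{1}-u^{*}\|_2\le 2^{-1/2}$ and thus $(u^{*})^{T}u^{1}\ge 3/4$ — exactly the good-initial condition needed to rerun the identical argument for the $v$-update (with $r^{v}=\underline{Y}\times_1 u^{1}\times_3 w^{0}$ and multiplier $\beta=(u^{*})^{T}u^{1}(w^{*})^{T}w^{0}>1/2$) and then for the $w$-update. This forward propagation is why only $v^{0},w^{0}$ need good initials. Since all three updates share the noise tensor $\underline{E}$, their success events $A,B,C$ are dependent, so I would combine them through the stated identity $\mathrm{P}(A\cap B\cap C)=\mathrm{P}(A)\,\mathrm{P}(B\mid A)\,\mathrm{P}(C\mid A\cap B)$, lower-bounding each conditional factor by $\Psi(t,T)$ and $\Psi(t,S)$ via the same null/range estimates applied conditionally, giving the product $\Psi(t,L)\Psi(t,T)\Psi(t,S)$.

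I expect the main obstacle to be the maximal estimate $\|(D^{u+})^{T}\epsilon^{u}\|_{\infty}$: obtaining the exact power $L^{k_u+1/2}$ and matching the precise failure probability in $\Psi$ needs the delicate variance computations for inverse difference operators from \cite{wang2014trend}, handled separately for $k_u=0$ (fused lasso) and $k_u=1$. A secondary difficulty is the conditional tail control in the chaining step: because $u^{1}$ is itself a function of $\underline{E}$, the noise $\epsilon^{v}=\underline{E}\times_1 u^{1}\times_3 w^{0}$ driving the $v$-update is not an a-priori Gaussian, so $\mathrm{P}(B\mid A)$ and $\mathrm{P}(C\mid A\cap B)$ must be argued directly rather than by appealing to independence across modes.
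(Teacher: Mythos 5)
Your proposal mirrors the paper's proof in all of its architecture: the same reduction of each block update to a Gaussian sequence problem $r^{u}=\alpha\,d^{*}u^{*}+\epsilon^{u}$ with $\epsilon^{u}\sim \N(0,I_L)$; the same basic inequality from feasibility of $u^{*}$ versus optimality of $u^{1}$ (your version is in fact algebraically cleaner than the paper's, which retains the term $1-\hat{u}^{T}u^{*}+\langle\hat{v},v^{*}\rangle\langle\hat{w},w^{*}\rangle (u^{*})^{T}(\hat{u}-u^{*})$, bounds it by $\bigl(1-\langle\hat{v},v^{*}\rangle\langle\hat{w},w^{*}\rangle\bigr)\|\hat{u}-u^{*}\|_2^2$, and divides by $\langle v^{*},\hat{v}\rangle\langle w^{*},\hat{w}\rangle-2^{-1}>1/16$ to get the constant $16$); the same split of the noise along $\mathrm{null}(D)$ and $\mathrm{row}(D)$; the same invocation of Corollary 4 of \cite{wang2014trend} for the row-space term, with the identical failure probability $\frac{1}{L^{3/2}\sqrt{\log L}}\sqrt{2/(5\pi)}$; and the same chaining via $\text{P}(A\cap B\cap C)=\text{P}(A)\,\text{P}(B\mid A)\,\text{P}(C\mid A\cap B)$, using $16\cdot 2^{-5}=1/2$ to propagate the good-initials condition forward. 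Your closing remark that $\epsilon^{v}=\underline{E}\times_1 u^{1}\times_3 w^{0}$ is not a priori Gaussian conditional on the first success event is a genuine subtlety which the paper's own proof passes over in silence.

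There is, however, one concrete gap: the null-space term. You bound $\langle P_{\mathrm{null}}\epsilon^{u},\delta\rangle\le\|P_{\mathrm{null}}\epsilon^{u}\|_2\,\|\delta\|_2\le t\,\|\delta\|_2$ and then assert that solving the quadratic ``delivers'' $\|\delta\|_2^2\le 16\bigl(\tfrac{ct}{d^{*}\sqrt{L}}+\tfrac{2c_uL^{k_u+1/2}}{d^{*}}\bigr)$. It does not: from $\tfrac{\alpha d^{*}}{2}\|\delta\|_2^2\le t\|\delta\|_2+2c_u\|(D^{u+})^{T}\epsilon^{u}\|_\infty$, the quadratic yields a noise contribution of order $t^{2}/(d^{*})^{2}$, with no $1/\sqrt{L}$ gain whatsoever --- Cauchy--Schwarz in the $(k_u+1)$-dimensional null space is blind to the ambient dimension $L$, and for large $L$ the resulting bound is strictly weaker than the one stated in the theorem. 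The paper obtains the $\tfrac{ct}{d^{*}\sqrt{L}}$ term by a different device: it takes an orthonormal basis $a_1,\ldots,a_{k_u+1}$ of $\mathrm{null}(D^{(k_u+1)})$, proves the coordinatewise estimate $\|a_j\|_\infty=O(L^{-1/2})$ --- immediate for $k_u=0$, and via the explicit induction and orthogonality computations in (\ref{big_o1})--(\ref{big_o3}) for $k_u=1$ --- and then expands $\langle \epsilon^{u},P_{\mathrm{null}}\delta\rangle=\sum_j(\epsilon^{uT}a_j)(a_j^{T}\delta)$, pairing Mill's inequality for each $\epsilon^{uT}a_j$ with the $\|a_j\|_\infty$ bound. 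This elementwise control of the null-space basis is the missing ingredient in your argument; notably, you flagged the $\ell_\infty$ variance computations from \cite{wang2014trend} as the main obstacle for the row-space term, but it is the null-space term where your proof as written cannot reproduce the theorem's rate.
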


Theorem \ref{convergence_rates}  states that with good initials our rank-1 decomposition algorithm will be very close to the true factors under weak assumptions concerning the smoothness of the true factors. Thus,  in practice before running our algorithms, we can consider a simple initialization that consists od solving Algorithm (\ref{alg1}) for the case where the matrices $D^{u}$,  $D^{v}$ and $D^{w}$  are all zero. This is known as the power method \citep{kolda2009tensor}. Moreover, statistical guarantees for a very related method to this procedure  were studied in \cite{anandkumar2014guaranteed}.

Finally, it should be note that Theorem \ref{convergence_rates} implicitly suggests that an appropriate choice of
tuning parameter is $(c_u, c_v, c_w) = ( \|D^u u\|_1, \|D^v v\|_1, \|D^w w\|_1)$ which only involves the
true latent vectors. In the case of the unconstrained version, a very similar statement
to Theorem \ref{convergence_rates}  holds by taking $\lambda_u = O(L^{k_u+1/2}\,\sqrt{\log(L)})$, $\lambda_u = O(T^{k_v+1/2}\,\sqrt{\log(T)})$ and $\lambda_w = O(S^{k_w+1/2}\,\sqrt{\log(S)})$.% 

Finally, we note  that, as one would expect, the larger $d^*$ is, the better we should expect to perform. This is intuitive, given that when $d^{*}$  increases and the unit vectors $u^{*}$, $v^{*}$  and  $w^{*}$  are fixed, the standard Gaussian noise becomes small compared to the magnitude of the observations.

\begin{figure}[t!]
	\begin{center}
		\includegraphics[width=5in,height= 5in]{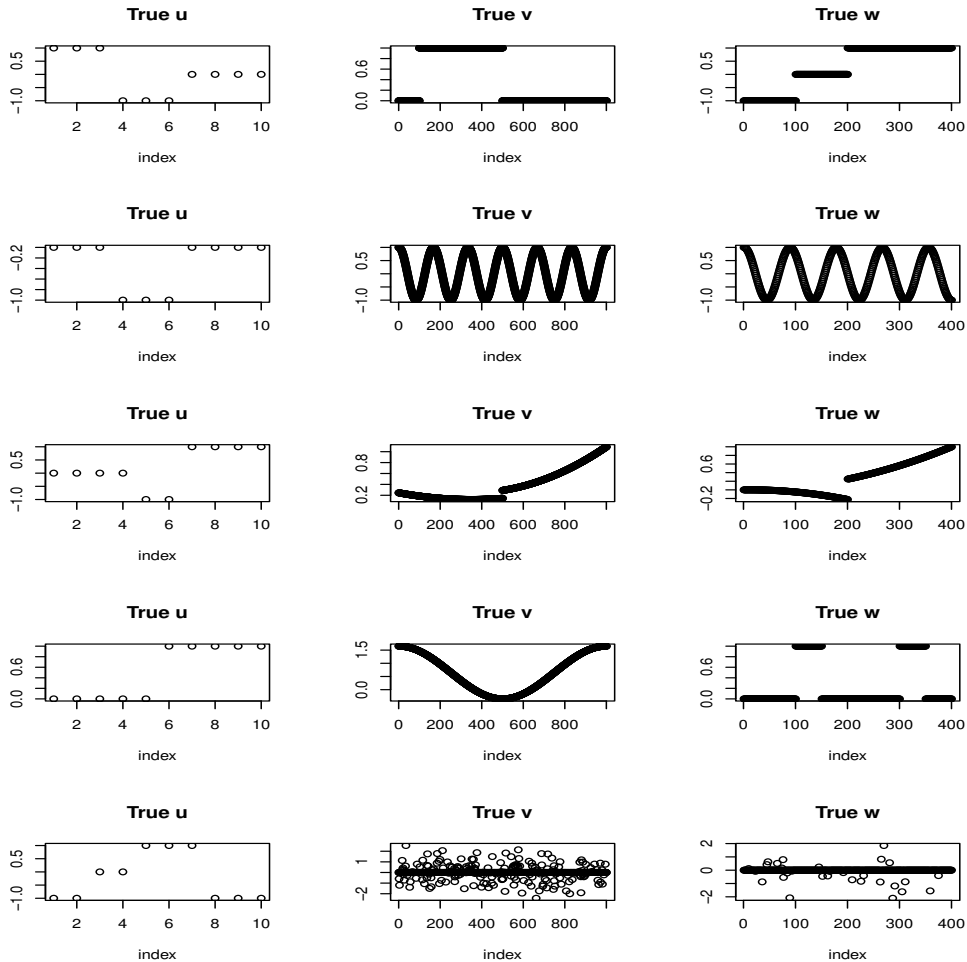} %{structures.eps}
		\caption{\label{fig:structure_examples} Latent vectors generating the structures for our examples. Each row gives rise to a different structure by taking the outer product on the corresponding vectors. }
	\end{center}
\end{figure}

\section{Experiments}

Our experiments focus mainly on the task of rank-1 recovery, since all of our algorithms are based on the development of a rank-1 PTD. For all our simulations we use the Frobenius norm of the difference between the estimated and true tensors as a measure of overall accuracy.  The Frobenius norm is a natural choice of model fit, since we also benchmark against a recovery method that does not directly produce a rank-1 tensor but does provide an estimate of the true mean tensor. This method is based on the idea of stacking several penalized matrix decompositions using the technique from  \cite{witten2009penalized}.  Specifically, we consider the tensor of observations $\underline{\tilde{X}}$ as a collection of 10 distinct $1000 \times 400$ matrices, each of which is estimated via a rank-1 PMD. This will lead to 10 estimated rank-1 matrices which are concatenated to build a 10
$\times$ 1000 $\times$ 400 tensor.  We call this procedure, with an abuse of notation,
PMD$\left(P_{v},P_{w}\right)$ where $P_{v}$ and $P_{w}$ are the
penalties on $v$ and $w$, when computing the rank-1 PMD matrices. % The same $P_{v}$ and $P_{w}$ are

The other methods included in the study are the PTD with different penalties $P_{u},P_{v},P_{w}$ denoted as PTD$\left(P_{u},P_{v},P_{w}\right)$. We consider choices such as the L1 penalty, the fussed lasso (FL) and  
trend filtering of order k (TFk). Note that we are implicitly comparing to the method from \cite{anandkumar2014guaranteed} since, for rank-1 recovery, this reduces to the power method, and hence to PTD(L1,L1,L1) for appropriate parameters.

For our simulations,  the tuning parameters by cross validation on a grid of possible values for each of the  parameters $\lambda_u$, $\lambda_v$, and $\lambda_w$.  In every experiment, we randomly select $10\%$ of the data for testing, using the other $90\%$ as training data. Out of a range of candidate tuning parameters we select those that produce the smallest error on the 10$\%$ held-out set. This process is repeated for each of 100 simulations, for different methods and structures, in order to obtain average Frobenius errors for all the competing methodologies with respect to every structure.

To see how different choices of penalties can behave under different scenarios, we  ran experiments using five different rank-1 tensors as the true mean tensor. These choices are designed to explore a range of plausible structures that we might find in real problems. For the first structure both $v$ and $w$ are piecewise flat.  For the second, both $v$ and $w$ are periodic functions.  For the third, both $v$ and $w$ are piecewise quadratic polynomials.  For the fourth, $v$ is smooth and $w$ is piecewise constant. For the fifth, both $v$ and $w$ are sparse but with no specific structural pattern like smoothness or flatness.  The goal of this final scenario is to understand how structural penalties perform in a data set where they are not warranted.  Further details of this simulation are included in the appendix. Figure \ref{fig:structure_examples} also shows a plot of these different structures.

\begin{table}[t!]
	\begin{footnotesize}
		\begin{center}
			\caption{\label{table_1} Comparison of the Frobenius norm error between the true tensor and the estimated tensor using different methods.}
			\medskip
			\begin{tabular}{l r r r r r}
				Method  & Structure 1  & Structure 2  & Structure 3  & Structure 4  & Structure 5  \\
				\midrule
				PTD(L1,L1,L1)    & 37.37  & 47.63  & 46.16     & 39.91   & 40.58 \\
				PTD(L1,FL,FL)    & 6.31   & 27.54  & 11.76     & 10.30   & 57.15 \\
				PTD(L1,TF1,FL)   & 15.07  & 20.49  & 11.55     & 9.00    & 70.32 \\
				PTD(L1,TF1,TF1)  & 17.61  & 14.40  &11.85      & 12.40   & 79.25 \\
				PMD(L1,L1)       & 85.05  & 89.10  & 100.70    & 91.89   & 72.87 \\
				PMD(L1,FL)       & 49.09  & 50.14  & 52.70     & 22.73   & 92.20\\
				PMD(FL,FL)       & 15.05  & 43.17  &  25.64    & 33.95   & 114.09\\
			\end{tabular}
		\end{center}
	\end{footnotesize}
\end{table}

The results of our simulation study are shown in Table \ref{table_1}. In all cases, PTD converged with few iterations, usually less than 10. From these results, it is clear that different choices of penalty are suitable for different problems.  For structure 1, in which the true $v$ and $w$ are piecewise flat, the combination PTD(L1, FL, FL) outperforms all the other choices that we considered. Interestingly, PTD(L1,TF1,FL) and PTD(L1, TF1,TF1) provided better results than the ``stacking'' method PMD(FL,FL). Note also that PTD(L1,TF1,FL) and PTD(L1,TF1,TF1) behave fairly similar to one another.  This is reasonable since a piecewise constant function is a special case of a piecewise linear function and hence we would expect that TF1 would produce only slightly worse results than fused lasso.

\begin{table}[t!]
	\begin{footnotesize}
		\begin{center}
			\caption{\label{table_2} Comparison of the Frobenius norm error between the true tensor and the estimated tensor using for different levels of noise and a fixed structure, averaging over 100 Monte Carlo simulations}
			\medskip
			\begin{tabular}{l r r r r r}
				Method  & $\sigma = 1.25$  & $\sigma = 1.50$ & $\sigma = 1.75 $ & $\sigma = 2.00 $  & $\sigma = 2.25 $  \\
				\midrule
				PTD(L1,L1,L1)    & 62.66   & 81.66   & 80.46     & 99.50   & 94.37\\
				PTD(L1,FL,FL)    & 32.61   & 38.80   & 41.63     &  46.32  & 49.33\\
				PTD(L1,TF1,FL)   & 24.55   & 28.55   & 32.35     &  37.87  & 38.43 \\
				PTD(L1,TF1,TF1)  & 17.00   & 21.35   & 22.27     &  27.09  & 27.36 \\
				PMD(L1,L1)       & 116.19  & 139.57  & 158.71    & 185.05  & 209.45\\
				PMD(L1,FL)       & 66.80   & 76.81   & 83.65     & 98.18  & 111.09 \\
				PMD(FL,FL)       & 52.43   & 57.52   & 65.36     & 83.98  &  92.71 \\
			\end{tabular}
		\end{center}
	\end{footnotesize}
\end{table}
%\begin{table}[t!]
%	\begin{footnotesize}
%		\begin{center}%
%			\caption{\label{table_2} Comparison of the Frobenius norm error between the true tensor and the estimated tensor using for different levels of noise and a fixed structure.}
%			\medskip
%			\begin{tabular}{l r r r r r}
%				Method  & $\sigma = 1.25$  & $\sigma = 1.50$ & $\sigma = 1.75 $ & $\sigma = 2.00 $  & $\sigma = 2.25 $  \\
%				\midrule
%				PTD(L1,L1,L1)    & 46.88   & 56.30   & 66.28     & 76.07   & 132.78\\
%				PTD(L1,FL,FL)    & 32.82   & 38.40   & 44.99     &  51.36  & 58.69\\
%				PTD(L1,TF1,FL)   & 27.84   & 33.62   & 40.15     &  46.30  & 52.96 \\
%				PTD(L1,TF1,TF1)  & 22.04   & 27.90   & 34.43     &  40.87  & 47.56 \\
%				PMD(L1,L1)       & 94.62   & 114.44  & 133.01    & 151.58  & 172.61\\
%				PMD(L1,FL)       & 120.42  & 144.97  & 169.49    & 194.99  & 220.81 \\
%				PMD(FL,FL)       & 49.48   & 58.94   & 70.77     & 75.14  &  79.82 \\
%			\end{tabular}
%		\end{center}
%	\end{footnotesize}
%\end{table}

Moreover, Table \ref{table_1} also illustrates when our methodology should not be expected to work. This is what happens with structure 5, where there is no spatial pattern in the true vectors $u$, $v$ and $w$, and instead they are merely sparse (80$\%$ of their coordinates are zero).  Here, as expected,  PTD(L1,L1,L1) outperforms any of our methods.

In the previous experiment we simulated all data sets with the assumption that the noise had variance 1. Now we fix the rank-1 tensor mean of Structure 2, where both $v$ and $w$ are periodic functions, and then we compare the performance of different methods as the standard deviation of the noise changes. Recalling that in Structure 2 both $v$  and $w$ are periodic smooth, it does not come as a surprise that PTD(L1,TF1,TF1) provides the best performance in all situations considered in Table \ref{table_2}. In addition, it is clear that the error of all methods increases as the variance of the noise does. Nevertheless, the performance of our method seems to tbe the most stable.

\begin{table}[t!]
	\begin{footnotesize}
		\begin{center}
			\caption{\label{table_3} Comparison of the Frobenius norm error between the true tensor and the estimated tensor using different methods, averaging over 100 Monte Carlo simulations}
			\medskip
			\begin{tabular}{l r r r r r r r r r r}
				%Method  &     Structures   \\
				\toprule
				Method & \multicolumn{6}{r}{ Structures } \\
				%      \midrule
				%\midrule
				&  1,2  &  1,3  & 1,4   &  2,3   & 2,4 & 3,4  &1,2,3  & 1,2,4   & 1,3,4 & 2,3,4       \\
				\midrule
				Anandkumar       & 544.0  & 310.5  & 85.4 & 121.0  & 128.9     & 273.4   & 534.9   & 555.3   & 346.6   &  350.3  \\
				PTD(L1,FL,FL)    & 55.3   & 46.5   & 27.1 & 71.2   & 59.7      & 107.3   & 184.2   & 48.3    & 102.8   &    126.1\\
				PTD(L1,TF1,TF1)  & 51.7   & 71.6   & 67.8 & 49.2   & 50.9      & 94.0    & 120.3   & 75.2   & 141.6   &    120.8\\
			\end{tabular}
		\end{center}
	\end{footnotesize}
\end{table}

Finally, we evaluate the recovery of mean tensors having multiple factors,  with $\sigma = 1$. Scenarios where the true model consists of $J=2$  and $J= 3$ are considered. Our comparisons  are based on taking sums of different rank-1 tensors using the structures discussed before. The competing methods are PTD(L1,FL,FL) and PTD(L1,TF1,TF1), versus Algorithm 1 from \cite{anandkumar2014guaranteed}. For the latter, we set the number of initializations $L = 30$  and the number of iterations $N=10$.  The results in Table \ref{table_3}  show a clear gain for our approach over the method from \cite{anandkumar2014guaranteed}, which do not impose any smoothness constraints on its solutions. %On the other hand, for our penalized tensor decompositions we use the output given by power method  as initializations. Our algorithms are then run for a maximum of 10 iterations of block coordinate updates, with each iteration consisting of rank-1 updates using (\ref{alg2}). The results in Table \ref{table_3}  show a clear gain for our approach over the method from \cite{anandkumar2014guaranteed}, which do not impose any smoothness constraints on its solutions.

\section{Real data examples}

\subsection{ Flu hospitalizations in Texas}

As a simple illustrative example, we consider measurements of flu activity and atmospheric conditions in Texas, see the appendix for information how to collect the data. There are 5 variables measured  daily across 25 cities in Texas from January 1, 2003 to December 31, 2009. The variables are: maximum and daily average observed concentration of particulate matter (air quality measure),  maximum and minimum temperature, and a measure of flu intensity capturing flu-related hospitalizations per million people. The data tensor is thus a 5x25x2556 array where we expect clear temporal patterns, along with correlations among the five variables.  For example, during the winter months we would expect an increase in flu-related hospitalizations, correlated with seasonal patterns of maximum and minimum daily temperatures.

To show the kind of interesting results that one can get with our methods, we compute a two-factor Parafac decomposition.  We use trend filtering of order 2 in the temporal mode and no penalty on the other two modes (although it would be straightforward to incorporate a penalty on the spatial mode as well.)  We use our main result (\ref{main_theorem}) to find the factors using coordinate-wise optimization. The tuning parameter for the trend-filtering penalty is chosen by cross validation from a grid of values to ensure that we get a smooth vector for the time mode.
%height = 3.05cm,

%\caption{\label{figure_2}Time vector for the second factor (left), and the corresponding matrix of loadings (right). The figure shows a matrix whose entries the pairwise product of the elements of the vectors corresponding to the cities and variables.}
%\end{figure}

\begin{figure}[t]
	\subfigure[]{\includegraphics[ width=7cm]{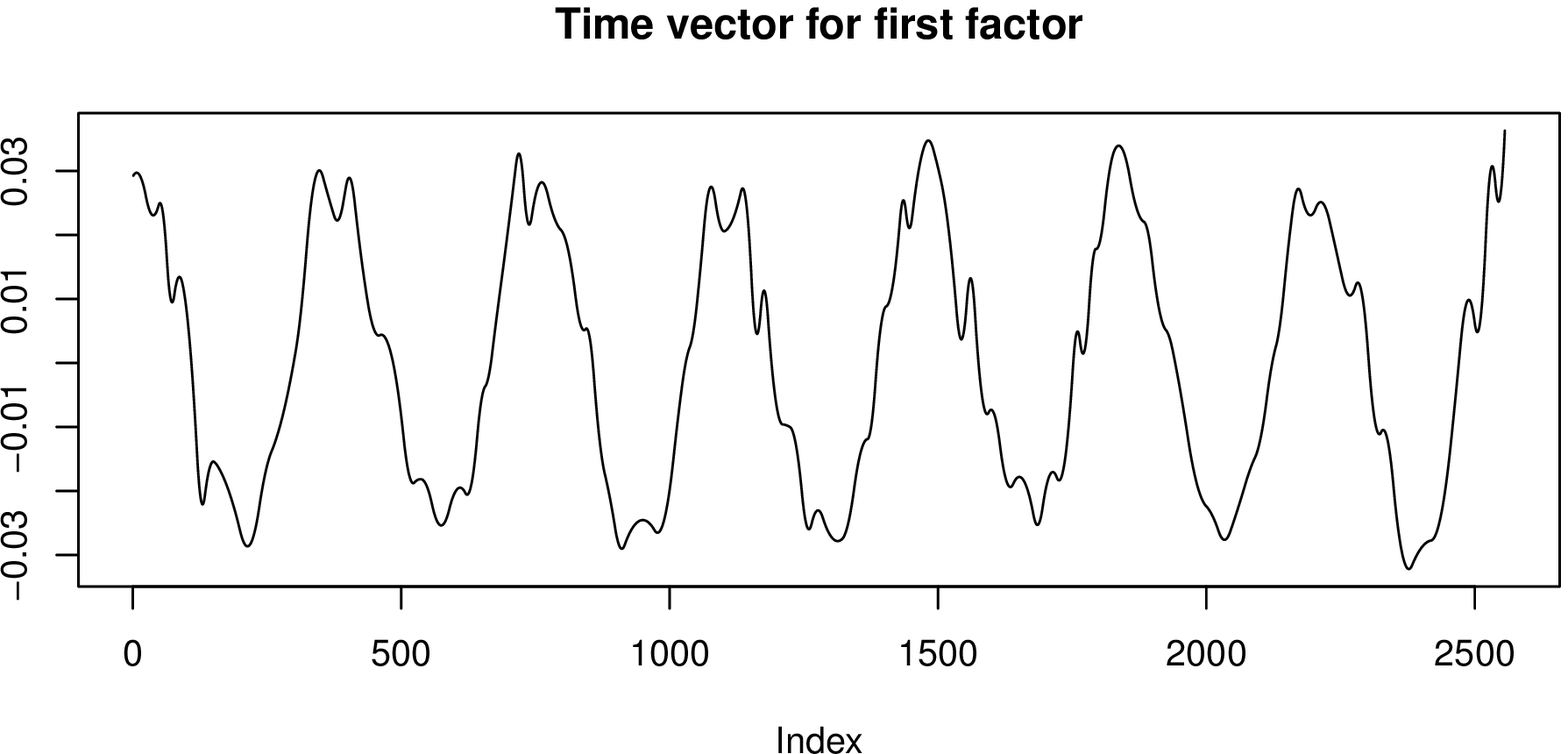} \hspace{5px}}
	\subfigure[]{\includegraphics[width=7cm]{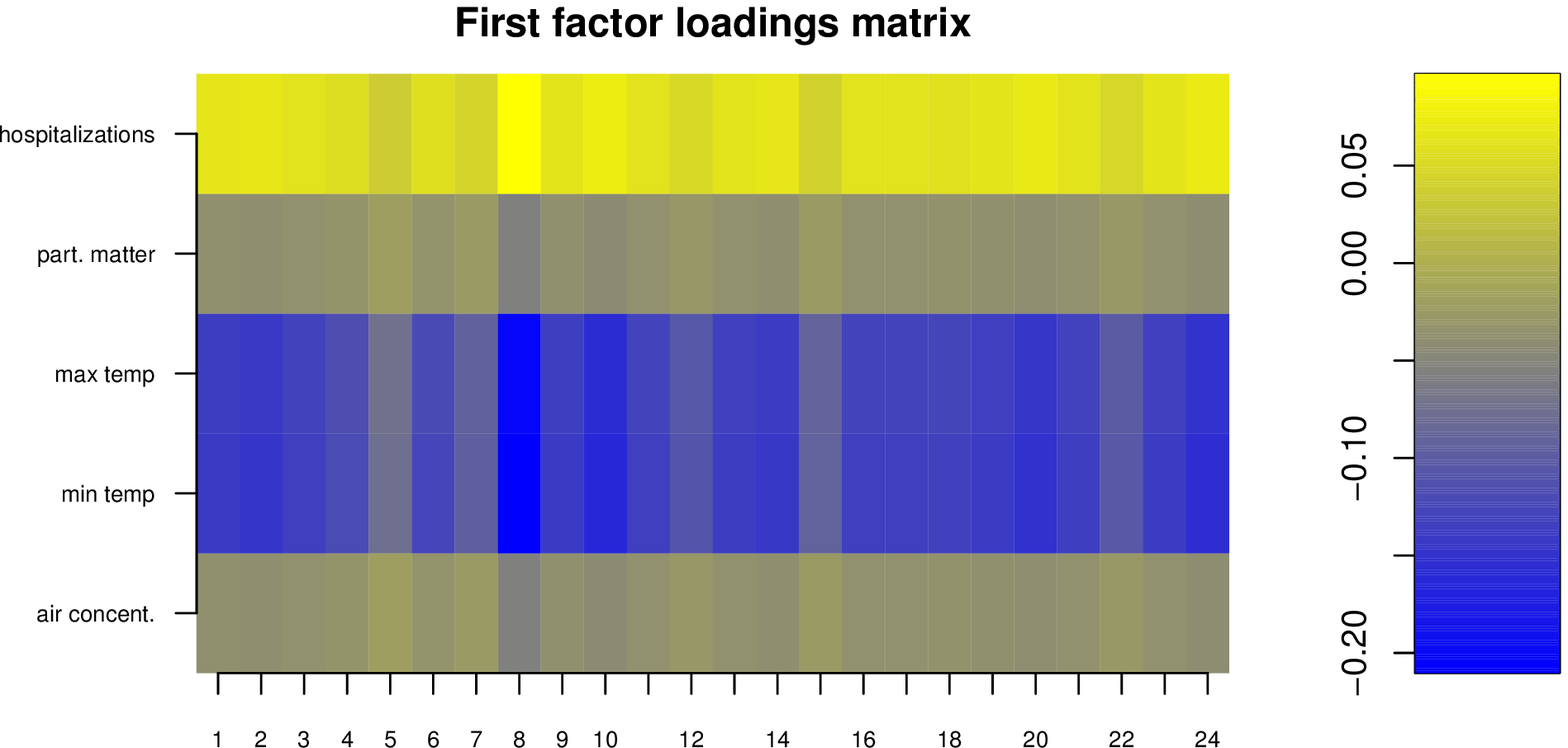} \hspace{10px}}\\
	\subfigure[]{\includegraphics[ width=7cm]{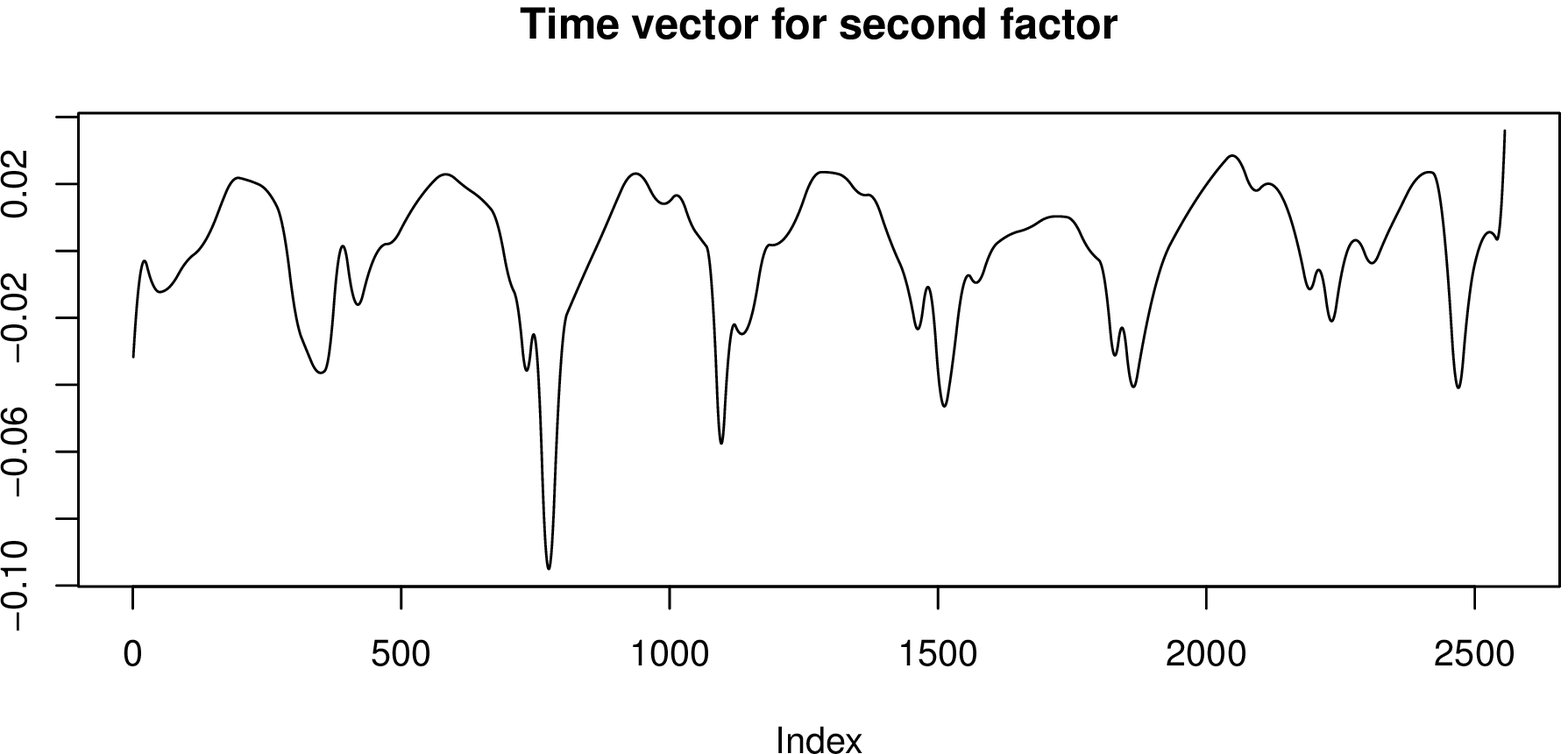} \hspace{10px}}
	\subfigure[]{\includegraphics[width=7cm]{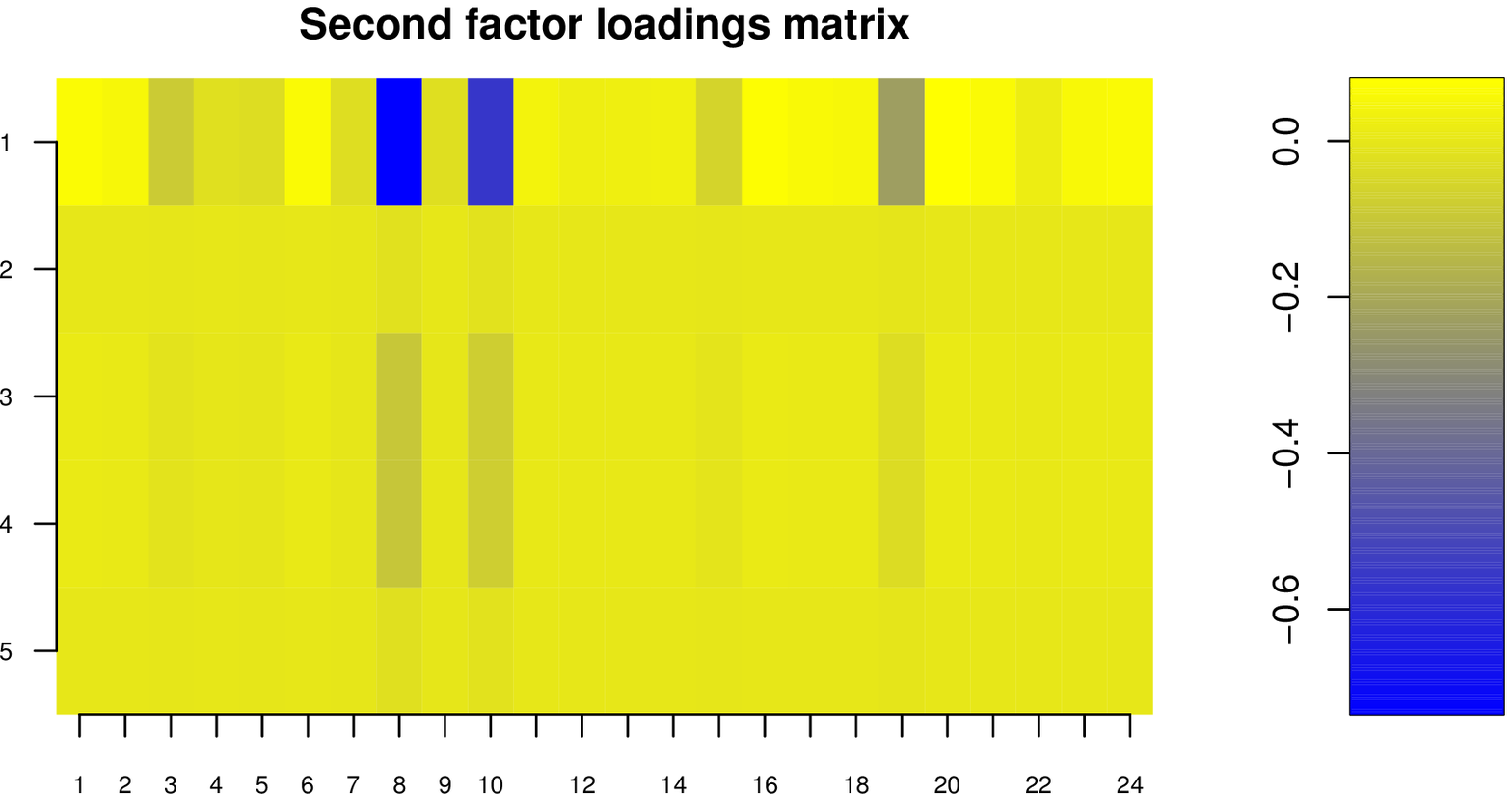} \hspace{5px}}
	\caption{\label{figure_1} (a) Time vector for the first factor (b) Loadings matrix for first factor (c) Time vector for second factor (d) Loadings matrix for second factor.}
\end{figure}

We considered fitting models with different values of $J$, we found that a model with one  factor explains  $36\%$ of the variance, a model with two factors explains $\%45$ percent of the variance, and a model with three factors results in increase in variance explained of less than $\%1$ with respect to the case $J =2$. Moreover, the model with 3 factors results in highly correlated factors. For this reason we use a model with $2$ factors.

From Figure \ref{figure_1} we note a clear seasonal effect. In the first factor we observed that the loadings for the flu intensity, minimum temperature, and maximum temperature can be all explained in a similar way. For the first of these three variable the loadings  are positive in every city. Hence, given the shape of the time vector we see a periodic pattern of flu cases  across cities with the highest during the winter months and the lowest during the summer months. %The pattern seems to be fairly similar with some differences throughout the nine year period. For the variables minimum temperature and maximum temperature we observe that the loadings are negative across all cities.

%Finally, we observe that fitting PMD results in 
%The variables maximum and daily average observed concentration of particulate matter (air quality measure)  have fairly similar loadings, indicating a strong correlation between them. Moreover, we notice that accross cities  the variables for this two variables are all positive indicating a similar pattern across the different cities.   Interestingly,  from Figure \ref{figure_1} the first factor suggest that for city $\#8$ has the more extreme weather conditions with lowest and smallest peaks in temperature (similar for both maximum and minimum daily temperatures) and the highest intensity of flu related hospitalizations which can be verified from the actual data set.

%The first and second factors together explain about $45\%$ of the variance. The second factor also supports the claim suggested by the first factor that $\#8$  and $\#10$  had the most extreme weather conditions with highest incidence of flu related hospitalizations.

%is somewhat more difficult to interpret because the matrix of loadings has positive as well as negative interactions accros cities. However, for cities $\#8$  and $\#10$  the loadings for all variables  support the claim suggested by the first factor that these two cities had the most extreme weather conditions with highest incidence of flu related hospitalizations.

\subsection{Motion capture data}

For a more challenging task, we evaluate the performance of our PTD method using data from the motion capture (moCap) repository at \url{mocap.cs.cmu.edu}. This consists of subjects performing different physical activities in repeated independent trials. We construct 3-array tensors by taking sets of videos as one mode, 12 representative variables of the body movements as the second mode, and data frames in time as the third mode. The 12 variables are listed in the appendix.

We built 2 tensors each for 5 different tasks, with each task generating a training-set tensor and a test-set tensor.  The training set tensor corresponds to a single subject performing multiple repetitions of a single related set of physical activities. Similarly, the corresponding test-set tensor corresponds to that same subject performing further repetitions of those same activities. For example, the first data set (comprising 1 tensor in the training set and 1 tensor in the test set) is called 126-swimming; this is formed by looking at 8 videos of subject 126 performing different swimming styles. In the moCap repository, videos 1,3,6,8 are used for training while videos  2,4,7,9 are used for testing. This results in both tensors having dimensions 4$\times$253$\times$12.

The other four data sets, explained in detail in the appendix, are 138-story (subject 138 walking and moving arms); 107-walking (subject 107 walking with obstacles); 9-running (subject 9 running); and 138-marching (just like it sounds).  For these data sets, the tensors dimensions are 4$\times$325$\times$12, 4$\times$828$\times$12, 4$\times$128$\times$12, 4$\times$371$\times$12 respectively.

In this context, our PTD approach can be thought of as a smoothing step applied to the training-set tensor, to yield better out-of-sample predictions for the test-set tensor.  We evaluate the performance of the method by calculating the reconstruction error (again, by Frobenius norm) when using the fitted/smoothed training-set tensor to predict the corresponding test-set tensor.

We find that for the tensors considered here, rank-1 is the best Parafac decomposition, since models with higher factors result in strongly correlated factors.  We ran our rank-1 PTD with a trend-filtering penalty of order 2 on the second mode, and no constraints in the other modes. We compare against the PMD using the same degree of smoothness, as well as the classical PCA method from \cite{anandkumar2014guaranteed}. From Table \ref{mocap} it is clear that PTD offers the best performance.  Thus we can see the  gain of using smooth penalties, reflecting the fact that physical movements involve motion-capture variables that change smoothly in time. Moreover, it is clearly favorable to pool information across videos, as our method does, rather than treating them independently, as with the PMD algorithm.

\begin{table}[t!]
	\begin{footnotesize}
		\begin{center}
			\caption{\label{mocap} Comparison of the Frobenius norm error between the estimated tensor and the test tensor for the moCap datasets}
			\medskip
			\begin{tabular}{l r r r r r }
				%Method  &     Structures   \\
				\toprule
				Method & \multicolumn{3}{r}{ Task} \\
				%      \midrule
				%\midrule
				&   126-swimming  &   138-story &  107-walking  & 9-running   &  138-marching         \\
				\midrule
				Anandkumar          &  254.80   & 134.63   & 135.17   & 84.40 & 143.86         \\
				PTD(L1,TF2,TF2)     &  250.98   & 131.78   & 134.92   & 84.29 & 142.44     \\
				PMD(L1,TF2,TF2)     &  267.89   & 145.14   & 143.43   & 88.06 & 149.41         \\
			\end{tabular}
		\end{center}
	\end{footnotesize}
\end{table}

\section{Discussion}

In many problems, tensors offer a natural way to represent high-dimensional, multiway data sets. However, tensors by themselves are difficult to interpret, creating the need for methods that shrink towards some simpler, low-dimensional structure.

Parafac models have been widely used for this task, but existing state-of-the-art  methods typically constrain the factors to be orthogonal, or simply do not enforce any constraints. As we have shown, this can be undesirable in practice, especially if one is looking for more interpretable factors, where there is a natural spatial or temporal relation between observation, and it is expected that the factors will be smooth. We fill this gap by providing a set of methods that precisely offer piecewise smooth Parafac decompositions. Our methods exploit state of the art convex optimization algorithms and are shown to have excellent performance in our experiments. We leave for future work the study of algorithms for more general classes of penalties that can potentially be non-convex.

Finally, we have shown two alternatives for finding our smooth tensor decompositions with generalized lasso penalties. The constrained formulation seems to be an attractive option for practitioners, with clear intuitive control over the level of smoothness exhibited by the solutions.  On the other hand, in light of its computational advantages, the unconstrained formulation offers a more practical approach, especially if there is no pre-existing knowledge about the anticipated smoothness of the solutions.

		\begin{small}
			\singlespacing
			\bibliographystyle{abbrvnat}
			\bibliography{tensor}
			
		\end{small}
		
		\appendix

			\section{ADMM algorithm to solve the constrained updates}
			
			In this section we  discuss how to find the updates for Algorithm 1 from the main document using the ADMM algorithm from \cite{zhu2015augmented}. Since these are symmetric we focus on the particular update $u^m$. In this case the problem is
			
			\begin{equation}
			\label{alternative_ADMM}
			\begin{array}{llll}
			u^m  & = & \underset{u}{\text{arg min} }  & \left(  -\underline{Y} \times_{2} v^{m-1} \times_{3} w^{m-1}\right)^{T}u\qquad  \\
			& &  \text{subject to} &   \quad\ \|u\|_{2}^{2}\,\leq1\,, \; \|z\|_1 \leq c_{u}, \\
			& & &  \,\,z= D^{u}u,\,\,\,\, (E_u - (D^{u})^TD^{u})^{1/2}  u = \tilde{z}.
			\end{array}
			\end{equation}
			We define $y$  as
			
			\[
			y = \underline{Y} \times_{2} v^{m-1} \times_{3} w^{m-1}
			\] 
			and solve  (\ref{alternative_ADMM}), using the ADMM algorithm from \cite{zhu2015augmented}, by considering the iterative updates
			
			\[
			\begin{array}{lll}
			u_{k+1} &=& \underset{ \|u\|_2^2 \leq 1  }{\arg \min} \left\{  \frac{1}{2}\|y  - u\|_2^2  +  2\,(\alpha_{k} - \alpha_{k-1})^T\,D^{u}\,u + \frac{\rho}{2}(u - u_k)^T\,E_u\,(u - u_k)    \right\}  \\
			&  = &  \frac{ \frac{y}{2}  - (D^u)^T\,(\alpha_{k} - \alpha_{k-1}) +  \frac{\rho}{2}E_u\,u_k  }{\|\frac{y}{2}  - (D^u)^T\,(\alpha_{k} - \alpha_{k-1}) +  \frac{\rho}{2}E_u\,u_k\|_2 } \\
			z_{k+1} &= & \underset{\|z\|_1 \leq c_u}{\arg \min}  \left\{ \|  D^u\,u_{k+1} + \rho^{-1}\,\alpha_k - z\|_2^2 \right\}  \\
			\alpha_{k+1} &= &  \alpha_k + \rho\,(D^u\,u_{k+1} - z_{k+1})  
			\end{array}
			\]
			where the update for $z_{k+1}$ can be done using the algorithm from \cite{duchi2008efficient}.
			
			As explained in the main manuscript, in practice, using update as part of an ADMM algorithm leads to difficulty enforcing the $\ell_1$ constraint in reasonable runtimes, and results in larger reconstruction error than the technique we have recommended.
			
			\section{Proof of technical results}

			\subsection{Proof of  Theorem \ref{main_theorem} }
			
			Note that the Lagrange dual function of the original problem is given
			by
			
			\[
			%\label{e1}
			\begin{split}L\left(\lambda,\mu\right) & =\underset{u}{\text{minimize}}\left[-x^{T}u+\lambda\left(\| Du\|_{1}-c_{S}\right)+\mu\left(\| u\|_{2}^{2}\;-\,1\right)\right]\\
			& =\underset{u}{\text{minimize}}\left[-x^{T}u+\lambda\| Du\|_{1}+\mu\| u\|_{2}^{2}\right]-\mu-\lambda c_{S}\\
			& \lambda,\mu\geq0.
			\end{split}
			\]

			Next, define for fixed $\lambda,$ $\mu$ $\geq$ $0,$ the function $g_{\lambda,\mu}$$:$
			$\mathbb{R}^{S}$ $\longrightarrow$ $\mathbb{R}$ given by
			
			\begin{equation}
			g_{\lambda,\mu}\left(u\right)=-x^{T}u+\lambda\| Du\|_{1}+\mu\| u\|_{2}^{2}\label{e2}.
			\end{equation}

			From (\ref{e2}) we need to solve the following problem:
			
			\begin{equation}
			\underset{u}{\text{minimize}}\; g_{\lambda,\mu}\left(u\right)\label{e3} \, ,
			\end{equation}
			which can be rewriten as
			
			\[
			\begin{split}\, & \,\underset{u,z}{\text{minimize}}\left[-x^{T}u+\lambda\| z\|_{1}+\mu\| u\|_{2}^{2}\right]\\
			& \, s.t\qquad z=Du \, . 
			\\
			\end{split}
			\]
			This problem has the following Lagrangian:
			
			\[
			L_{\lambda,\mu}\left(z,u,\gamma\right)=-x^{T}u+\lambda\| z\|_{1}+\mu\| u\|_{2}^{2}+\gamma^{T}\left(Du-z\right) \, ,
			\]
			which is nicely separable in $u$ and $z$.
			%We now observe that
			%
			%\[
			%\underset{z,u}{\text{minimize}}\; L_{\lambda,\mu}\left(z,u,\gamma\right)=\underset{u}{\text{minimize}}\left[-x^{T}u+\mu\| u\|_{2}^{2}+\gamma^{T}Du\right]+\underset{z}{\text{minimize}}\left[\lambda\| z\|_{1}-\gamma^{T}z\right].
			%\]

			Let us now consider some special cases of $\mu$ and $\lambda$.
			First, if $\lambda = 0$ and $\mu = 0$, then clearly,
			
			\[
			\underset{z,u}{\text{min}}\; L_{\lambda,\mu}\left(z,u,\gamma\right)=-\infty\qquad\forall\gamma,
			\]
			%
			%
			%i.e,
			%
			%\[
			%-\infty=\underset{\gamma}{\text{maximize }}\underset{z,u}{\text{minimize}}\; L_{\lambda,\mu}\left(z,u,\gamma\right)=\underset{u}{\text{minimize}}\left[-x^{T}u+\lambda\| Du\|_{1}+\mu\| u\|_{2}^{2}\right].
			%\]
			Second, if $\lambda = 0$ and $\mu > 0$, then
			
			\[
			\underset{u}{\text{min}}\left[-x^{T}u+\lambda\| Du\|_{1}+\mu\| u\|_{2}^{2}\right]=-\frac{1}{4\mu}x^{T}x.
			\]
			Next, if $\lambda$ $>$ $0$ and $\mu$ $=$ $0$, then
			
			\[
			\underset{z,u}{\text{min}}\; L_{\lambda,\mu}\left(z,u,\gamma\right)=-\infty\qquad\forall \gamma \quad \text{with}\quad D^{T}\gamma\neq x,
			\]
			and
			
			\[
			\underset{z,u}{\text{min}}\; L_{\lambda,\mu}\left(z,u,\gamma\right)=0\qquad\forall\gamma\; \text{with}\quad D^{T}\gamma=x\quad \text{and}\quad\|\gamma\|_{\infty}\leq\lambda.
			\]
			Thus
			\[
			\underset{u}{\text{min}}\left[-x^{T}u+\lambda\| Du\|_{1}\right]=\begin{cases}
			\begin{array}{ccc}
			-\infty & \text{if } x\notin Range\left(D^{T}\right)\\
			0 & \text{if  there exist }\gamma\; \text{with}\quad D^{T}\gamma=x\quad \text{and}\quad\|\gamma\|_{\infty}\leq\lambda.
			\end{array}\end{cases}
			\]
			
			Finally, let us now focus on $\mu$ $>$ $0$ or $\lambda$ $>$ $0.$ Then
			
			\[
			\underset{u}{\text{min}}\left[-x^{T}u+\mu\|u\|_{2}^{2}+\gamma^{T}Du\right]=-\frac{1}{4\mu}\| x-D^{T}\gamma\|_{2}^{2},
			\]
			while (see \cite{tibshirani2011solution})
			
			\[
			\underset{z}{\text{min}}\left[\lambda\|z\|_{1}-\gamma^{T}z\right]=\begin{cases}
			\begin{array}{ccc}
			0 & \text{if} & \|\gamma\|_{\infty}\leq\lambda \, ,\\
			-\infty &  & \text{otherwise}.
			\end{array}\end{cases}
			\]
			Hence, the dual problem to (\ref{e3}) is equivalent to
			
			\[
			\begin{split}\, & \,\underset{\gamma}{\text{minimize}}\quad\:\frac{1}{4\mu}\| x-D^{T}\gamma\|_{2}^{2}\\
			& \, \text{subject to} \qquad\|\gamma\|_{\infty}\leq\lambda \, . \\
			\\
			\end{split}
			\]
			But for $\mu$ $>$ $0$ fixed, this is equivalent to solving the problem
			
			\begin{equation}
			\begin{split}\, & \,\underset{\gamma}{\text{minimize}}\:\frac{1}{2}\| x-D^{T}\gamma\|_{2}^{2}\\
			& \, s.t\qquad\|\gamma\|_{\infty}\leq\lambda \, ,\\
			\\
			\end{split}
			\label{e4}
			\end{equation}
			which can be solved for every $\lambda$ $\geq$ $0$ using the solution
			path algorithm from \cite{tibshirani2011solution}.
			
			Let us denote by $\hat{\gamma}_{\lambda}$
			the solution to (\ref{e4}) for a fixed $\lambda.$ Therefore,
			
			\[
			L\left(\lambda,\mu\right)=-\frac{1}{4\mu}\| x-D^{T}\hat{\gamma}_{\lambda}\|_{2}^{2}-\mu-\lambda c_{S},
			\]
			which implies that the dual to the original problem becomes
			
			\begin{equation}
			\underset{\lambda,\mu\geq0}{\text{maximize}}\left[-\frac{1}{4\mu}\| x-D^{T}\hat{\gamma}_{\lambda}\|_{2}^{2}-\mu-\lambda c_{S}\right].
			\label{e5}
			\end{equation}
			Finally, recall from \cite{boyd2004convex} that any  $u^{*}$ solution to the original problem  must also solve
			
			\[
			u^{*}=\underset{u}{\text{arg min}}\left[-x^{T}u+\lambda^{*}\| Du\|_{1}+\mu^{*}\| u\|_{2}^{2}\right],
			\]
			for $\lambda^{*}$  and $\mu^{*}$ that are optimal for (\ref{e5}). However, the objective function in (\ref{e3}) is strictly convex since $\mu^{*}$
			$>$ $0,$ and so its solution $u^{*}$ is unique and also solves
			
			\[
			\begin{split}\, & \,\underset{u,z}{\text{minimize}}\left[-x^{T}u+\lambda^{*}\| z\|_{1}+\mu^{*}\|u\|_{2}^{2}\right]\\
			& \, \text{subejct to}\qquad z=Du.\\
			\\
			\end{split}
			\]
			The  KKT optimality conditions for this problem  imply that
			
			\[
			0=\left(\begin{array}{c}
			-x+2\mu^{*}u^{*}\\
			\lambda^{*}\alpha
			\end{array}\right)+\left(\begin{array}{c}
			D^{T}\gamma_{\lambda^{*}}\\
			-\gamma_{\lambda^{*}}
			\end{array}\right)
			\]
			for some $\alpha$ subgradient of the function $z$ $\rightarrow$
			$\| z\|_{1}$ at $z^{*}$ $=$ $D\, u^{*}.$ Therefore
			
			\[
			u^{*}=\frac{\left(x-D^{T}\hat{\gamma}_{\lambda^{*}}\right)}{\| x-D^{T}\hat{\gamma}_{\lambda^{*}}\|_{2}} \, ,
			\]
			and the result follows.
			
			$\qquad$$\qquad$$\qquad$$\qquad$$\qquad$$\qquad$$\qquad$$\qquad$$\qquad$$\qquad$$\qquad$$\qquad$$\qquad$$\qquad$$\qquad$$\qquad$$\qquad$$\qquad$$\:\:\:\,$$\square$

			\subsection{Proof of Theorem 2}

			Here we assume that data is generated as
			
			\[
			\underline{Y} =  d^{*}\,u^{*} \circ v^{*} \circ^{*} w  + \epsilon %\,\
			\]
			and
			\[
			\| \hat{v} - v^{*} \|_2 < \frac{1}{\sqrt{2}},\,\,  \| \hat{w}- w^{*}\|_2 < \frac{1}{ \sqrt{2} }.
			\]
			Under these conditions we show that $\hat{u}$ defined as
			
			\[
			\begin{aligned}
			\hat{u} &=&& \underset{u\in\mathbb{R}^{S}}{\text{arg min}}
			& & - \underline{Y} \times_{2} \hat{v} \times_{3} \hat{w}  \\
			&&& \text{subject to} &&  \|u\|_{2}^{2} \leq 1,\,\,\,\,\|D^{(k_u+1)}u\|_1 \leq c_u
			\end{aligned}
			\]
			satisfies
			
			\[
			\text{P}\left( \frac{1}{2}\|u^{*} -\hat{u} \|_2^2   \leq \frac{1}{2} \frac{    \frac{c\,t}{d^{*}\sqrt{L}}  +  \frac{2\,c_u\,L^{k_u+1/2}}{d^{*}}     }{ \langle v^{*}, \hat{v}\rangle \,\langle w^{*}, \hat{w}\rangle - 2^{-1}  }  \right)  \geq  1- \sqrt{ \frac{2}{\pi} } \frac{\exp\left(-t^2/2\right)  }{t}
			- \frac{1}{L^{3/2}  \sqrt{\log L } }\sqrt{  \frac{2}{5\pi} }.
			\]
			The proof will then follow by an application of this claim after each block update, and applying the identity for the intersection of such events (see the main paper).

			To prove the claim above, we start by noticing that
			
			\[
			\begin{aligned}
			\hat{u} &=&& \underset{u\in\mathbb{R}^{S}}{\text{arg min}}
			& & -   (d^{*})^{-1}  \underline{Y} \times_{2} \hat{v} \times_{3} \hat{w}  \\
			&&& \text{subject to} &&  \|u\|_{2}^{2} \leq 1,\,\,\,\,\|D^{(k_u+1)}u\|_1 \leq c_u.
			\end{aligned}
			\]
			Next we use the notation $R$  for the row space of $D$: $R = row(D)$ and $R^{\perp} = null(D)$. Moreover, $P_V$ denotes the perpendicular projection onto the space $V$. Hence, by suboptimality,

			\begin{equation}
			\label{t2_1}
			\begin{array}{lll}
			\frac{1}{2} \|\hat{u} -u^{*} \|_2^2  &  \leq &   1 -  \hat{u}^{T} u^{*}+  \frac{ 1  }{d^{*}} \left(\underline{Y} \times_2 \hat{v} \times_3 \hat{w}\right)^{T}\left(  \hat{u} - u^{*} \right)  \\
			&  =  &   1 -  \hat{u}^{T} u^{*} +  \frac{ 1  }{d^{*}} \left( \left(d^{*}\,u^{*} \circ v^{*} \circ^{*} w  + \epsilon \right)  \times_2 \hat{v} \times_3 \hat{w}\right)^{T}\left( P_{R}  + P_{R^{\perp}}  \right)\left(  \hat{u} - u^{*} \right)\\
			&  =  & 1 -  \hat{u}^{T} u^{*} + \langle \hat{v}, v^{*} \rangle\,\langle \hat{w},w^{*} \rangle\,(u^{*})^T\left( \hat{u} - u^{*} \right) +  \frac{1}{d^{*}}\epsilon \times_2 \hat{v} \times_3 \hat{w}\left( P_R + P_{R^{\perp}}  \right)\left( \hat{u} - u^{*} \right).
			\end{array}
			\end{equation}
			
			Let us now bound  the terms in the expression above. First, let $a_1,\ldots,a_k$ be an orthonormal basis of $R^{\perp}$.  Then

			\begin{equation}
			\label{t2_2}
			\begin{array}{lll}
			\frac{1}{d^{*}} \sum_{j=1}^{k_u+1} \left(\epsilon \times_2 \hat{v} \times_3 \hat{w}\right)^{T}\,P_{R^{\perp}}\left( \hat{u} - u^{*} \right) & \leq & \frac{1}{d^{*}} \sum_{j=1}^{k_u + 1} \vert \left(\epsilon \times_2 \hat{v} \times_3 \hat{w}\right)^{T} a_j  \vert \,\|a_j\|_{\infty}\,2\,\sqrt{2}\  \\
			& \leq &   \frac{1}{d^{*}}\frac{c}{\sqrt{L}} \sum_{j= 1}^{k_u + 1}   \vert \left(\epsilon \times_2 \hat{v} \times_3 \hat{w}\right)^{T} a_j  \vert \\
			&\leq & \frac{1}{d^{*}}\frac{c\,(k_u + 1)\,t}{\sqrt{L}}
			\end{array}
			\end{equation}
			for some constant $c$ with probability at least
			
			\[
			1 -   (k_u+1) \sqrt{ \frac{2}{\pi}} \frac{  \exp\left(-t^2/2\right)}{t}.
			\]
			Here we have used Mill's inequality and the fact that we can take $\|a_j\|_{\infty} = O(L^{-1/2})$. The latter claim is immediate for $k_u = 0$. If $k_u = 1$ it can be proven as follows. First, we set $a_0 =  (1/\sqrt{L},\ldots,1/\sqrt{L})^T \in \mathbb{R}^L$. Then by the definition of $R^{\perp}$, an induction argument shows that
			
			\begin{equation}
			\label{big_o1}
			a_{1,k+1} = k\,a_{1,2}   - (k-1)a_{1,1}
			\end{equation}
			for  $k \in \{2,...,L-1\}$, where $a_{1,j}$  is the $j-$th coordinate of $a_{1}$.  But since $a_{1}$ is a unit vector, simple algebra yields
			
			\begin{equation}
			\label{big_o2}
			\begin{array}{lll}
			1  & = &    \left[ (L-1)^2 +   \frac{(L-2)(L-1)( 2(L-2)+1 )}{6} \right]a_{1,2}^2  +  \left[  \frac{(L-2)(L-1)( 2(L-2)+1 )}{6} \right]a_{1,1}^2   \\
			& & - 2\left[       \frac{(L-2)(L-1)( 2(L-2)+1 )}{6} + \frac{(L-2)(L-1)}{2} \right]a_{1,1}\,a_{1,2}.
			\end{array}
			\end{equation}
			Now, since $a_1$  and $a_0$  are orthogonal, we must have

			\begin{equation}
			\label{big_o3}
			0  = \frac{L(L-1)}{2}a_{1,2}   -   \left( \frac{(L-1)(L-2)}{2} -1 \right)a_{1,1}.
			\end{equation}
			Then the fact that $\|a_1\|_{\infty} = O(L^{-1/2})$ follows from (\ref{big_o1}), (\ref{big_o2}) and (\ref{big_o3}).
			
			Next we bound the term involving the projection operator onto the space $R$ in (\ref{t2_1}). By Holder's inequality,
			
			\[
			\begin{array}{lll}
			\frac{1}{d^{*}}\left(\epsilon \times_2 \hat{v} \times_3 \hat{w}\right)^T\, P_R \,\left( \hat{u} - u^{*} \right)  & \leq & \frac{1}{d^{*}}\left\|\left( \epsilon \times_2 \hat{v} \times_3 \hat{w}\right)^T (D^{(k_u+1)})^{-} \right\|_{\infty}\,\left( \|D^{(k_u+1)} \hat{u} \|_1  + \|D^{(k_u+1)} u^{*} \|_1  \right)
			\end{array}
			\]
			and hence, as in Corollary 4 from \cite{wang2014trend},  we find that
			
			\begin{equation}
			\label{t2_3}
			\text{P}\left(  \frac{1}{d^{*}}\epsilon \times_2 \hat{v} \times_3 \hat{w}\, P_R \,\left( \hat{u} - u^{*} \right) \leq  \frac{L^{k_u +1/2} \sqrt{\log(L)}  c_u}{d^{*}}  \right)  \geq  1  -  \frac{1}{L^{3/2}\sqrt{\log L }} \sqrt{\frac{2}{5\,\pi}}.
			\end{equation}
			
			On the other hand, by the Cauchy--Schwarz inequality, we have
			
			\begin{equation}
			\label{t2_4}
			\begin{array}{lll}
			1 -  \hat{u}^{T} u^{*} + \langle \hat{v}, v^{*} \rangle\,\langle \hat{w},w^{*} \rangle\,(u^{*})^T\left( \hat{u} - u^{*} \right) & =  &    \left(1 -  \langle \hat{v}, v^{*} \rangle\,\langle \hat{w},w^{*} \rangle \right)\left( \|u^*\|_2^2  - \langle \hat{u},u^{*} \rangle \right)\\
			& \leq &   \left(1 -  \langle \hat{v}, v^{*} \rangle\,\langle \hat{w},w^{*} \rangle \right) \|  \hat{u} - u^{*} \|_2^2.
			\end{array}
			\end{equation}
			Combining (\ref{t2_1}), (\ref{t2_2}), (\ref{t2_3}), (\ref{t2_4}), and proceeding in similar fashion for the other updates, the identity
			\[
			\text{P}(A \cap B \cap C) =  \text{P}(A)  \text{P}(B \mid A)  \text{P}(C \mid A \cap B)
			\]
			for any events $A$,$B$  and $C$ implies the result.
			
			For the case of multiple factors, we have the following result.  Suppose that the data is generated as
			
			\[
			\underbar{Y} \,\,=\,\, \sum_{j=1}^J d_j^*\,u^*_j\,\circ\,v^*_j\\,\circ,w^*_j \,\,+\,\, \,\bar{E}
			\]
			where $E$ is tensor of white noise. Suppose that we have current parameters estimates   of  $\{u^*_{j}\}_{j \neq j_0}$,  $\{v^*_{j}\}_{j }$,  $\{w^*_{j}\}_{j }$, $\{d^*_{j}\}_{j }$ which we denote by  $\{\hat{u}_{j}\}_{j \neq j_0}$,  $\{\hat{v}_{j}\}_{j }$,  $\{\hat{w}_{j}\}_{j }$, $\{\hat{d}_{j}\}_{j }$. 
			
			Let us now provide an error bound for the estimate of $u^*_{j_0}$ given all the other estimates. To that end, define 
			
			\[
			\hat{u}_{j_0} = \begin{aligned}
			& \underset{u\in\mathbb{R}^{L} }{\text{arg min}}
			& & \frac{1}{2}\left\| u-  \left(\,\underline{Y}\times_2 \hat{v}_{j_0}\times_3 \hat{w}_{j_0} - \sum_{j \neq j_0} \hat{d}_j\,(\hat{v}_{j_0})^T\,\hat{v}_{j}\,(\hat{w}_{j_0})^T\,\hat{w}_j\,\hat{u}_j\right)
			\right \|_{F}^{2} \\
			& \text{subject to}
			& &  \| D^{u}u\|_1 \,\leq\,c_u\\
			& & &  u^{T}u \,=\,1, 
			\end{aligned}
			\]
			and assume that $\| D^{u} u^*_{j_0}\|_1 \leq c_u$ and
			
			\[
			\| \hat{v}_{j_0} - v^*_{j_0}\|_2 < \frac{1}{\sqrt{2}},\,\,\,\,  \| \hat{w}_{j_0} - w^*_{j_0}\|_2 < \frac{1}{\sqrt{2}}
			\]
			
			This leads to the following lemma.
			\begin{lemma}
				Under the definitions just given,
				\[
				\text{P}\left( \frac{1}{2}\|u^{*}_{j_0} -\hat{u}_{j_0} \|_2^2   \leq \frac{1}{32}\left(  \frac{c\,t}{d_{j_0}^{*}\sqrt{L}}  +  \frac{2\,c_u\,L^{k_u+1/2}}{d_{j_0}^{*}}\right) +  U  \right)  \geq  1- \sqrt{ \frac{2}{\pi} } \frac{\exp\left(-t^2/2\right)  }{t}
				- \frac{1}{L^{3/2}  \sqrt{\log L } }\sqrt{  \frac{2}{5\pi} },
				\]
				where
				\[
				U  = \left\|\frac{1}{d_{j_0}^*} \sum_{j \neq j_0} \left(- \hat{d}_j\,\left(\hat{v}_j \cdot  \hat{v}_{j_0}\right)\left(\hat{w}_j \cdot  \hat{w}_{j_0} \right)\hat{u}_j  + d^*_j\,\left(v^*_j \cdot  \hat{v}_{j_0}\right)\left(w^*_j \cdot  \hat{w}_{j_0} \right)\,u^*_j   \right) \right\|_2.
				\]
			\end{lemma}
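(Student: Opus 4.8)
The plan is to mirror the proof of Theorem 2 almost verbatim, isolating the contribution of the other factors as a single additive perturbation. The first observation is that, because the feasible set lies on the sphere $u^Tu=1$, minimizing $\tfrac12\|u-b\|_2^2$ is the same as maximizing $b^Tu$, where
\[
b \;:=\; \underline{Y}\times_2\hat v_{j_0}\times_3\hat w_{j_0} \;-\; \sum_{j\ne j_0}\hat d_j\,(\hat v_{j_0}^T\hat v_j)(\hat w_{j_0}^T\hat w_j)\,\hat u_j
\]
is the (random) target of the update. Since $u^*_{j_0}$ is feasible by the assumption $\|D^u u^*_{j_0}\|_1\le c_u$, suboptimality of $\hat u_{j_0}$ gives $b^T(\hat u_{j_0}-u^*_{j_0})\ge 0$, and hence, exactly as in the first line of (\ref{t2_1}),
\[
\tfrac12\|\hat u_{j_0}-u^*_{j_0}\|_2^2 \;\le\; 1-\hat u_{j_0}^Tu^*_{j_0}+\tfrac{1}{d^*_{j_0}}\,b^T(\hat u_{j_0}-u^*_{j_0}).
\]

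Next I would substitute the data-generating model $\underline{Y}=\sum_{j}d_j^*\,u_j^*\circ v_j^*\circ w_j^*+\bar E$ into $b$ and split the mode products. The $j=j_0$ term produces the signal direction $d^*_{j_0}\langle v^*_{j_0},\hat v_{j_0}\rangle\langle w^*_{j_0},\hat w_{j_0}\rangle\,u^*_{j_0}$, exactly as in Theorem 2; the noise produces $\bar E\times_2\hat v_{j_0}\times_3\hat w_{j_0}$; and the remaining $j\ne j_0$ terms of $b$ collect into the deterministic residual
\[
\delta \;:=\; \sum_{j\ne j_0}\Big(d_j^*\,(v_j^*\cdot\hat v_{j_0})(w_j^*\cdot\hat w_{j_0})\,u_j^* - \hat d_j\,(\hat v_j\cdot\hat v_{j_0})(\hat w_j\cdot\hat w_{j_0})\,\hat u_j\Big),
\]
whose scaled norm is precisely $U=\|\delta\|_2/d^*_{j_0}$. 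Writing $\rho:=\langle v^*_{j_0},\hat v_{j_0}\rangle\langle w^*_{j_0},\hat w_{j_0}\rangle$, the inequality above then splits into three pieces: a deterministic piece $1-\hat u_{j_0}^Tu^*_{j_0}+\rho\,(u^*_{j_0})^T(\hat u_{j_0}-u^*_{j_0})$, a cross piece $\tfrac{1}{d^*_{j_0}}\delta^T(\hat u_{j_0}-u^*_{j_0})$, and a noise piece $\tfrac{1}{d^*_{j_0}}(\bar E\times_2\hat v_{j_0}\times_3\hat w_{j_0})^T(\hat u_{j_0}-u^*_{j_0})$.

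The three pieces are bounded separately. The deterministic piece is handled exactly as in (\ref{t2_4}): it equals $(1-\rho)(1-\hat u_{j_0}^Tu^*_{j_0})\le(1-\rho)\|\hat u_{j_0}-u^*_{j_0}\|_2^2$. The noise piece is identical in form to Theorem 2, so splitting $\hat u_{j_0}-u^*_{j_0}=P_R(\hat u_{j_0}-u^*_{j_0})+P_{R^\perp}(\hat u_{j_0}-u^*_{j_0})$ and reusing (\ref{t2_2})--(\ref{t2_3}) --- Mill's inequality together with $\|a_j\|_\infty=O(L^{-1/2})$ on the null space, and H\"older's inequality with Corollary 4 of \cite{wang2014trend} on the row space --- reproduces the bound $\tfrac{c\,t}{d^*_{j_0}\sqrt L}+\tfrac{2c_uL^{k_u+1/2}}{d^*_{j_0}}$ on the same event, whose probability is unchanged because only $\bar E$ is random. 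The genuinely new piece is the cross term, which I would control by Cauchy--Schwarz, $\tfrac{1}{d^*_{j_0}}\delta^T(\hat u_{j_0}-u^*_{j_0})\le U\,\|\hat u_{j_0}-u^*_{j_0}\|_2$; since $\hat u_{j_0}$ and $u^*_{j_0}$ are unit vectors this is an $O(U)$ additive contribution. Collecting the three bounds, moving $(1-\rho)\|\hat u_{j_0}-u^*_{j_0}\|_2^2$ to the left, and using that the assumptions $\|\hat v_{j_0}-v^*_{j_0}\|_2,\|\hat w_{j_0}-w^*_{j_0}\|_2<2^{-1/2}$ force $\rho>9/16$ and hence $\rho-\tfrac12>\tfrac1{16}>0$, division by $\rho-\tfrac12$ yields the stated inequality, with the additive $U$ coming from the cross piece and the numerical constant absorbing $\rho-\tfrac12\ge 1/16$.

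The hard part is not the computation but the treatment of $\delta$: unlike in Theorem 2, the update $\hat u_{j_0}$ is defined relative to \emph{estimates} of the other factors, so the target direction is no longer the clean rank-one signal and $\delta$ cannot be made to vanish. The key point that makes the argument go through is that $\delta$ is deterministic once we condition on the other factors, so it does not interact with the Gaussian tail bounds and can be pulled out by Cauchy--Schwarz without disturbing the probability $1-\sqrt{2/\pi}\,t^{-1}e^{-t^2/2}-L^{-3/2}(\log L)^{-1/2}\sqrt{2/(5\pi)}$, which is inherited verbatim from the single-factor claim. The remaining work is bookkeeping: checking that $u^*_{j_0}$ stays feasible and that the row/null-space constants for $D^{(k_u+1)}$ are precisely those established in the proof of Theorem 2. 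This lemma is then the per-update building block, so the full multiple-factor statement follows by applying it after each block update and chaining with the intersection identity $\text{P}(A\cap B\cap C)=\text{P}(A)\,\text{P}(B\mid A)\,\text{P}(C\mid A\cap B)$, exactly as for Theorem \ref{convergence_rates}.
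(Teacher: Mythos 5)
Your proposal matches the paper's own proof essentially verbatim: the paper likewise starts from the sub-optimality inequality, substitutes the multi-factor model so that the $j\neq j_0$ terms collect into the deterministic residual whose scaled norm is $U$, pulls that cross term out by bounding it with its $\ell_2$ norm, and then defers the signal and noise pieces to the $J=1$ argument of Theorem 2, i.e.\ the identity in (\ref{t2_4}) and the bounds (\ref{t2_2})--(\ref{t2_3}), exactly as you do. If anything you are more explicit than the paper, which stops at the sub-optimality display and says the claim follows as in the case $J=1$; the one caveat is that your Cauchy--Schwarz step yields $U\,\|\hat u_{j_0}-u^*_{j_0}\|_2$ and the final division by $\rho-\tfrac12\ge\tfrac1{16}$ would rescale $U$ as well, so the lemma's precise constants (the $\tfrac1{32}$ and the bare additive $U$) do not fall out of your derivation --- but they do not fall out of the paper's equally terse bookkeeping either.
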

			\begin{proof}
				
				To show this we proceed as follows. We start noticing that, by sub-optimality,
				
				\[
				\begin{array}{lll}
				\frac{1}{2} \left\| \hat{u}_{j_0} - u^*_{j_0} \right\|_2^2  & \leq & \frac{1}{d_{j_0}^*}
				\left[ \sum_{j \neq j_0} \left(- \hat{d}_j\,\left(\hat{v}_j \cdot  \hat{v}_{j_0}\right)\left(\hat{w}_j \cdot  \hat{w}_{j_0} \right)\hat{u}_j  + d^*_j\,\left(v^*_j \cdot  \hat{v}_{j_0}\right)\left(w^*_j \cdot  \hat{w}_{j_0} \right)\,u^*_j   \right) \right] (-u^*_{j_0} + \hat{u}_{j_0}) \\
				& & +  1\,-\,   \hat{u}_{j_0}\cdot u^*_{j_0}  +     \frac{1}{d_{j_0}^*}
				\left[ \bar{E}\times_2\hat{v}_{j_0}\times_3 \hat{w}_{j_0}  +   d^*_{j_0}\left( v^*_{j_0}\cdot \hat{v}_{j_0}\right) \left( w^*_{j_0}\cdot \hat{w}_{j_0}\right)u^*_{j_0}\right] (-u^*_{j_0} + \hat{u}_{j_0}) \\
				& \leq &   \left\|\frac{1}{d_{j_0}^*} \sum_{j \neq j_0} \left(- \hat{d}_j\,\left(\hat{v}_j \cdot  \hat{v}_{j_0}\right)\left(\hat{w}_j \cdot  \hat{w}_{j_0} \right)\hat{u}_j  + d^*_j\,\left(v^*_j \cdot  \hat{v}_{j_0}\right)\left(w^*_j \cdot  \hat{w}_{j_0} \right)\,u^*_j   \right) \right\|_2 \\
				& & +  1\,-\,   \hat{u}_{j_0}\cdot u^*_{j_0}  +     \frac{1}{d_{j_0}^*}
				\left[ \bar{E}\times_2\hat{v}_{j_0}\times_3 \hat{w}_{j_0}  +   d^*_{j_0}\left( v^*_{j_0}\cdot \hat{v}_{j_0}\right) \left( w^*_{j_0}\cdot \hat{w}_{j_0}\right)u^*_{j_0}\right] (-u^*_{j_0} + \hat{u}_{j_0}) \\
				\end{array}
				\]
				and hence the claim follows for the case $J = 1$.
				
			\end{proof}

			\section{Discussion and extensions}
			\label{connections}
			
			\subsection{Further connections with existing work}
			\label{tpca}

			In recent years, many different efforts have been made to apply the ideas of sparse regression and sparse matrix decomposition to the context of higher-order tensors.  Our paper has shown that structured penalties from the generalized-lasso class can offer significant modeling benefits when the underlying factors are piecewise constant or smooth.  Moreover, our main result shows that the factors can be efficiently computed by a coordinate-wise optimization routine, exploiting results on the solution path of the dual problem for the generalized lasso.  Both the simulated and real examples have shown the power of the approach.
			
			Our general framework has applications across a wide class of problem formulations for analyzing multi-way data.  In this section, we describe some connections with other existing methods.  We also describe how orthogonality constraints can be imposed in our approach.
			
			Recall that in the usual PCA framework we are
			given samples $x_{1}, \ldots, x_{m} \in \mathbb{R}^{J}$, and the task is to find a unit vector $a \in \mathbb{R}^{J}$
			such that the points $x_{1}^{T}a, \ldots, x_{m}^{T}a,$ on the real
			line have the largest possible variance. The problem can be stated in matrix notion as
			%\begin{equation}
			%\label{Simple_pca}
			\[
			\underset{\| a\|_{2}=1}{\text{maximize}}\quad a^{T}X^{T}Xa \, .
			\]
			%\end{equation}
			
			By imposing L1 constraints, the authors of \cite{jolliffe2003modified} propose to sacrifice the variance explained in order to gain interpretability. The resulting problem, called SCoTLASS, is
			%\begin{equation}
			%\label{Scotlass}
			\[
			\text{maximize}\quad a^{T}X^{T}Xa\qquad \text{subejct to}\quad\|a\|_{2}^{2}\leq1,\quad\| a\|_{1}\leq c \, .
			\]
			%\end{equation}
			
			More generally, the authors of \cite{lu2008mpca} consider Multilinear Principal Component
			Analysis of Tensor Objects (MPCA).  This is defined for a set of tensors
			$A_{1}, A_{2}, \ldots, A_{M} \in \mathbb{R}^{I_{1}\times I_{2}\times\cdots I_{N}}$
			by the solution to the following problem:
			%\begin{equation}
			%\label{MPCA}
			\[
			\left\{ U_{m}^{*T},m=1,..M\right\} =\underset{\left\{ U_{m}^{T},m=1,..M\right\} }{\text{arg max}}\Psi_{b},
			\]
			%\end{equation}
			where $B_{m} = A_{m} \times_{1} U_{1}^{T} \times_{2} \cdots \times_{M} U_{M}^{T}$,
			$m = 1, \ldots, M$ and $\Psi_{b}$ is their total scatter. The motivation is to perform feature extraction by determining a multilinear projection that captures most of the original tensorial input variation. The fitting algorithm proceeds by iteratively decomposing the original problem to a series of multiple projection sub-problems.
			
			Combining the regularization idea of SCoTLASS with MPCA, we can formulate the penalized MPCA problem as
			\begin{equation}
			\label{penalized_mpca}
			\begin{aligned}
			& \underset{u,v}{\text{maximize}} && \underset{k=1}{\overset{m}{\sum}}\mid\left(\bar{X}-X_{k}\right)\bar{\times}_{1}u\bar{\times}_{2} v \mid^{2}\qquad\qquad\qquad\qquad\;\\
			& \text{subject to}  && P_{S}\left(u\right)\,\leq\, c_{S},\quad P_{T}\left(v\right)\,\leq\, c_{T},\quad\,\| u\|_{2}^{2},\,\| v\|_{2}^{2}\,\leq\,1,
			\end{aligned}
			\end{equation}
			where $\bar{X}$ is the sample mean of the training data $X_1, X_2, \ldots, X_k$. The solution to this problem allows us to project the training data into a lower-dimensional space in a way that maximizes the variance explained while retaining structural constraints in the projection space. The key point is that we can use the rank-1 PTD algorithm to solve (\ref{penalized_mpca}), since it can be  verified  that (\ref{penalized_mpca})  is equivalent to
			%\begin{equation}
			%\label{PMCA_alternative}
			\[
			\begin{aligned}
			& \underset{u,v,w}{\text{maximize}} && Y\bar{\times}_{1}u\bar{\times}_{2}v\bar{\times}_{3}w\\
			& \text{subject to} && \, P_{S}\left(u\right)\,\leq\, c_{S}\quad P_{T}\left(v\right)\,\leq\, c_{T}\\
			& && \| u\|_{2}^{2}\,\leq\,1,\quad\| v\|_{2}^{2}\,\leq\,1,\quad\| w\|_{2}^{2}\,\leq\mathbf{\,}1,
			\end{aligned}
			\]
			%\end{equation}
			where $\underline{Y}$ $\in$ $\mathbb{R}^{S\times T\times m}$ is
			a tensor satisfying that $Y_{s,t,k}$ $=$ $\left(X_{k}\right)_{s,t}$.  This connection is, in fact, analogous to the connection established in \cite{witten2009penalized} between SCoTLASS and the PMD algorithm.

			\subsection{Orthogonal factors}
			\label{orthogonal_factors}
			
			We now return to the multiple-factor decomposition proposed in the main paper. Given and input data tensor $Y$, we seek to find a decomposition as the sum of $k$ rank-1 tensors, as in the Parafac model. We proposed an algorithm to find such a representation based on our algorithm for rank-1 PTD, but there were no constraints regarding the orthogonality of the vectors involved in the representation.   Orthogonality is a natural constraint in factor-type models, and it is often imposed in tensor decompositions; see \cite{cichocki2013tensor,kolda2009tensor,de2000multilinear}. In the framework of matrix decomposition, the authors of \cite{witten2009penalized} explored an approach to obtain multiple rank-1 factors that were sparse and whose vectors were unlikely to be correlated. However, no formal guarantee was provided that the output vectors would be orthogonal. Here we fill that gap and provide a simple method for finding factors whose vectors are orthogonal and satisfy structural constraints, including sparsity.
			
			Suppose that we are given $k$ rank-1 tensors that approximate $\underline{Y}$. At the $k+1$ step, we  try to find a rank-1 tensor that best approximates the current tensor of residuals. This is done by solving an optimization problem whose objective function is the Frobenius norm of the residual, with structural constraints specified by the chosen penalties. If we also impose the additional constraint of orthogonality, then the update for $u_{k+1}$  can be written as the solution of a problem of the form
			%\begin{equation}
			\[
			\begin{aligned}
			& \underset{u}{\text{minimize}}
			& &
			u^{T}x \\
			& \text{s.t}
			& &  \| u\|_{2}^{2}\,\leq\,1,\quad\|Du\|_{1}\leq c,\quad u^{T}u_{j}\,=\,0\quad\forall j=1,...,k-1 \, .
			\end{aligned}
			\]
			%\end{equation}
			We can further rewrite this as
			\begin{equation}
			\begin{aligned}
			\label{sparse_problem}
			& \underset{u}{\text{minimize}}
			& &
			\theta^{T}\tilde{x} \\
			& \text{s.t}
			& &  \| \theta\|_{2}^{2}\,\leq\,1,\quad\|\tilde{D}\theta\|_{1}\leq c,
			\end{aligned}
			\end{equation}
			where the matrix $\tilde{D}$  equals the product of $D$ and a  matrix whose columns form a basis of the orthogonal complement of the space spanned by $u_1$,....,$u_{k-1}$;  see \cite{witten2009penalized}. Hence, we can use our rank-1 PTD algorithm to find sparse orthogonal Parafac decompositions.
			
			The orthogonality constraint imposes additional computational burdens.  As the authors of \cite{arnold2014efficient} point out, problems  of the form (\ref{sparse_problem}) can be solved efficiently if the matrix $\tilde{D}$ is sparse.  This can happen if the vectors are $u_1$,  $u_2$,...,$u_{k-1}$ are highly sparse.  If, on the other hand, $\tilde{D}$  is not sparse, then (\ref{sparse_problem}) can be solve via its dual, using a projected-Newton method similar to the recent algorithm in \cite{wang2014trend}.
			%can be quite challenging, especially when the dimension of $u$ is large.
			
			\subsection{Multilinear regression}
			\label{regression}
			
			Here we show how some of the basic ideas in multilinear
			regression are related to our methodology. See \cite{zhao2013higher} for a discussion of multilinear regression.  A more general
			approach for tensor regression is discussed in \cite{cichocki2013tensor}.
			
			Motivated by the statistical setting in \cite{banerjee2004hierarchical}, and by the discussion of tensor regression given in \cite{cichocki2013tensor}, we consider the problem of finding single-factor representations of $\underline{X} \in \mathbb{R}^{S\times T\times J}$ and $\underline{Y} \in \mathbb{R}^{S\times T}$ such that
			\begin{equation}
			\label{Regression_problem}
			\underline{X}\approx g\, p\circ q\circ a,\quad\underline{Y}\approx d\, p\circ q,\qquad g,d\in\mathbb{R},\: p\in\mathbb{R}^{S},\: q\in\mathbb{R}^{T},\: a\in\mathbb{R}^{J}.
			\end{equation}
			
			The intuition behind (\ref{Regression_problem}) corresponds to a problem in which, for every time point $t$ and location
			$s$, there exists an observation $y_{s,t}$ and a vector of covariates
			$x_{s,t,:}$.  Hence it is natural to impose the constraint that
			the one-factor representations of $X$ and $Y$ have common vectors
			associated with time and location. The difficulty of this problem
			lies in the fact that we need to simultaneously approximate $X$ and $Y$ by
			the representations in (\ref{Regression_problem}). Below we formally state a version of this problem, incorporating some additional constraints that are merely for identifiability purposes.
			
			%\begin{equation}
			%\label{new_regression_problem}
			\[
			\begin{aligned}
			& \underset{p\in\mathbb{R}^{S},q\in\mathbb{R}^{T},a\in\mathbb{R}^{J},g,d\in\mathbb{R}}{\text{minimize}} && \|\underline{X}-g\, p\circ q\circ a\|_{F}^{2}+\|\underline{Y}-d\, p\circ q\|_{F}^{2}\\
			& \text{subject to }&& P_{S}\left(p\right)\,\leq\, c_{S}\quad P_{T}\left(q\right)\,\leq\, c_{T}\quad P_{J}\left(a\right)\,\leq\, c_{J}\quad\\
			&  && \| p\|_{2}^{2} \, \leq \, 1\quad\| q\|_{2}^{2} \, \leq \,1\quad\| a\|_{2}^{2} \, \leq \, 1. \quad
			\end{aligned}
			\]
			%\end{equation}

			Clearly the objective function in (\ref{Regression_problem}) is a quadratic form for each
			of $p$, $q$, and $a$ individually, while holding the other terms fixed. This can make the solving the problem complicated. Alternatively, we can  try to maximize the product of the terms $\prec\underline{X},p\circ q\circ a \succ$ and $\prec\underline{Y},p\circ q\succ$, as observed by \cite{zhao2013higher}. But we notice the following elementary inequality:
			\[
			\begin{array}{ccc}
			2\langle\underline{Y},p\circ q\rangle\langle\underline{X},p\circ q\circ a\rangle & \leq & \frac{\left(\langle\underline{X},p\circ q\circ a\rangle+\langle\underline{Y},p\circ q\rangle\right)^{2}}{2}\qquad\qquad\quad\\
			& \leq & \langle\underline{X},p\circ q\circ a\rangle{}^{2}+\langle\underline{Y},p\circ q\rangle{}^{2}.
			\end{array}
			\]
			Hence, it makes sense to solve the problem
			\begin{equation}
			\label{final_regression_problem}
			\begin{aligned}
			& \underset{p\in\mathbb{R}^{S},q\in\mathbb{R}^{T},a\in\mathbb{R}^{J},g,d\in\mathbb{R}}{\text{minimize}} && \langle\underline{X},p\circ q\circ a\rangle+\prec\underline{Y},p\circ q\rangle\\
			& \text{subject to }&& P_{S}\left(p\right)\,\leq\, c_{S}\quad P_{T}\left(q\right)\,\leq\, c_{T}\quad P_{J}\left(a\right)\,\leq\, c_{J}\quad\\
			&  && \| p\|_{2}^{2} \, \leq \, 1\quad\| q\|_{2}^{2} \, \leq \,1\quad\| a\|_{2}^{2} \, \leq \, 1 \quad
			\end{aligned}
			\end{equation}
			which has an  trilinear obective function in $(p,q,a)$ Thus, we can
			try to solve (\ref{final_regression_problem}) by using coordinate
			wise optimization, taking advantage of our previous developments.
			
			Although we do not include simulations for problem  (\ref{final_regression_problem})  in our experiments section, our investigations suggest that combining the information of both the predictors $\underline{X}$ and the response $\underline{Y}$  can provide better results than just fitting a PTD on $\underline{Y}$ and a PMD on  $\underline{X}$ separately.

			\subsection{Extensions to Tucker models}
			\label{extensions}
			
			Up until now we have being interested in Parafac models, which are special cases of general Tucker model.  A penalized Tucker model was proposed in \cite{cichocki2013tensor} in which the goal is to maximize with respect to $U^{\left(n\right)}$ $\in$
			$\mathbb{R}^{I_{n}\times J_{n}},$ $n = 1, \ldots, N$ the cost function
			%\begin{equation}
			%\label{Tucker_cost}
			\[
			D_{F}\left(Y\| G,\left\{ U\right\} \right)=\|\underline{Y}-G\times\left\{ U\right\} \|_{F}^{2}+\underset{n}{\sum}\alpha_{n}C_{n}\left(U^{\left(n\right)}\right),
			\]
			%\end{equation}
			with penalties $C_{1}, \ldots, C_{n}$ on $U^{\left(1\right)}, \ldots, U^{\left(n\right)}$
			respectively and positive parameters $\alpha_1, \ldots, \alpha_n$.
			
			We provide some insight on a penalized Tucker problem with generalized-lasso
			penalties on the columns of each $U^{\left(n\right)}$. For simplicity
			of notation, we assume $N=3$, $J_{n}=2$, and $n = \{1, 2, 3\}$.  Our formulation of the problem becomes
			\begin{equation}
			\label{penalized_tucker_model}
			\begin{aligned}
			& \underset{u_{:1}^{\left(1\right)},...,u_{:1}^{\left(N\right)}}{\text{minimize}} && \|\underline{Y}-\underset{j_{1},j_{2},j_{3}}{\sum}d_{j_{1}j_{2}j_{3}}\, u_{:j_{1}}^{\left(1\right)}\circ u_{:j_{2}}^{\left(2\right)}\circ u_{:j_{3}}^{\left(3\right)}\|_{F}^{2}\\
			& \text{subject to }&& P_{n}\left(u_{:j}^{\left(n\right)}\right)\,\leq\, c_{n}\quad\forall n\in\left\{ 1,2,3\right\} ,\quad j\in\left\{ 1,2\right\} \\
			&&& \| u_{:j}^{\left(n\right)}\|_{2}^{2}=1\quad\forall n\in\left\{ 1,...,N\right\} \quad j\in\left\{ 1,2\right\} \, .
			\end{aligned}
			\end{equation}
			This can be rewritten as an optimization problem whose objective function is  is linear on each $u_{:j_{i}}^{\left(i\right)}$ when the other variables are fixed, and convex on each $d_{j_{1}j_{2}j_{3}}$ when every other variable is fixed. Hence, we can use an algorithm similar to our rank-1 PTD procedure based on coordinate wise optimization.
			
			There is yet a different way to think about Tucker models. In this
			class of problems the core tensor is considered random, and the interest
			lies in reconstructing the matrices $U^{\left(n\right)},$ $n= 1,\ldots,N$, which
			are assumed to be invertible.  The model is written as $\underline{Y} = Z \times \left\{ U\right\} $ where $Z$ is an array of independent
			standard normal entries; see \cite{hoff2011separable}. There, the authors proved that
			$cov\left(\underline{Y}\right) = \Sigma_{1} \circ \Sigma_{2} \circ \cdots \circ \Sigma_{N}$, with $\Sigma_{n}= U^{\left(n\right)} \left(U^{\left(n\right)}\right)^{T}$.  The matrices $U^{\left(n\right)}$ introduce covariance
			structure to the model.
			
			Given samples $\underline{Y}_{1}, \ldots, \underline{Y_{n}}$ we would like
			to estimate $\Sigma_{1},$..., $\Sigma_{n}$.  Hence we form the following
			problem:
			\[
			\underset{\Sigma_{n} \in S^+}{\text{maximize}}\quad \log\, P\left(\underline{Y}_{1},...,\underline{Y_{n}}\mid\Sigma_{1},...,\Sigma_{n}\right)-\underset{n}{\overset{}{\sum}}\lambda_{n}P\left(\Sigma_{n}\right) \, ,
			\]
			where the constraint is the set of non-negative definite matrices.  This formulation appeared in \cite{hoff2011separable}, but without the penalties. Similar formulations
			including penalties can be found in \cite{leng2012sparse} and \cite{friedman2010applications}. In fact, a
			coordinate descent type of algorithm can be used that is similar
			to the one proposed in \cite{hoff2011separable}, but that solves every subproblem with
			methods described in \cite{leng2012sparse} and \cite{friedman2010applications}.

			\section{Simulation details}
			
			In our set of experiments we considered   5  different hidden rank-1 tensors  constructed  as $u \circ v \circ w$  where the vectors $u$, $v$  and $w$  are described below.  The notation $\{x\}_i^j$ indicates that components $i$ through $j$ of the vector are all equal to the value $x$.
			
			\paragraph{Structure 1}
			\begin{itemize}
				\item $u = \{1,1,1,-1,-1,-1,0,0,0,0\}$.
				\item $v = \{0\}_1^{100},\{1\}_{101}^{500},\{0\}_{501}^{1000}$.
				\item $w$ $=$   $\{-1\}_1^{100},\{0\}_{101}^{200},\{1\}_{201}^{400}$.
			\end{itemize}

			\paragraph{Structure 2}
			\begin{itemize}
				\item $u$ $=$  $\{0,0,0,-1,-1,-1,0,0,0,0\}$.
				\item  $v = \{v_i\}_{i=1}^{1000}$ with  $v_i = \text{cos}\left(12\,\pi\,\frac{(i-1)}{999}\right)$ for $i =  1,2, \ldots, 1000$.
				\item $w = \{w_i\}_{i=1}^{400}$ with  $w_i = \text{cos}\left(9\,\pi\,\frac{(i-1)}{399}\right)$ for $i =   1,2, \ldots, 400$.
			\end{itemize}

			\paragraph{Structure 3} %\frac{(i-1)}{999}
			\begin{itemize}
				\item $u$ $=$  $\{0,0,0,0,-1,-1,1,1,1,1\}$.
				\item $v = \{v_i\}_{i=1}^{1000}$ with $v_i = \left(\frac{(i-1)}{999}-0.7\right)^2 + \left(\frac{(i-1)}{999}\right)^2$  for $i =   1,2, \ldots, 1000$.
				%$v$  $=$  $(seq(0,1,length= 1000)-0.7)^2 +  (seq(0,1,length= 1000))^2$
				\item  Define $w_{i}^{\prime}$  $=$ $\frac{i-1}{399}$ for $i = 1,\dots, 400$. Then, set  $w_{i}$ $=$   $ w_{i}^{\prime}\,\left(0.05-w_{i}^{\prime}\right)$  for  $i = 1, \ldots, 200$  and  $w_{i}$ $=$   $\left(w_{i}^{\prime}\right)^2 $  for  $i = 201,\dots, 400$.
				%$w_{1:200}$ $=$   $ w_{1:200}^{\prime}* (0.05-w_{1:200}^{\prime})$   and   $w_{201:400}$ $=$   $ w_{201:400}^{\prime}*w_{201:400}^{\prime}$  where   $w^{\prime}$  $=$ $seq(0,1,length=400)$
			\end{itemize}

			\paragraph{Structure 4}
			\begin{itemize}
				\item $u$ $=$   $\{0,0,0,0,0,1,1,1,1,1\}$  %$c\{-1,-1,0,0,1,1,1,-1,-1,-1\}$.
				\item   Define   $v_{i}^{\prime}$  $=$  $\frac{i-1}{999}$ for $i = 1,\dots, 1000$. Then, \\
				$v_{i}$  $=$    $cos(\pi\,v_i^{\prime}) + .65$.   %$0$ if $i = 1, \dots, 399$.\\
				% $v_{i}$ $=$ $\left(v^{\prime}_{i}  -0.7\right)^2 +  \left(v^{\prime}_{i}\right)^2 $ if $i = 400, \dots, 500$.\\
				% $v_{i} = 0$  if $i = 501,502, \dots, 1000$.
				% $v_{400:500}$ $=$ $((v^{\prime}_{400:500}  -0.7)^2 +  v^{\prime}_{400:500} *  v^{\prime}_{400:500} )$\\
				%       $v_{1:399}$  $=$ $rep(0,399)$\\
				%      $v_{500:1000}$  $=$ $rep(0,500)$\\
				\item  $w$  $=$  $ \{0\}^{100},\{1\}_{101}^{150},\{0\}_{151}^{300},\{1\}_{301}^{350},\{0\}_{351}^{400}$.
			\end{itemize}

			\paragraph{Structure 5}
			\begin{itemize}
				\item $u$ $=$  $ \{-1,-1,0,0,1,1,1,-1,-1,-1\}$.
				\item $v$  has $80\%$ of its entries equal to zero and the remanining $20\%$  are random numbers drawn from a standar normal distribution.
				\item $w$  has $92.5\%$ of its entries equal to zero and the remanining $7.5\%$  are random numbers drawn from a standar normal distribution.
			\end{itemize}

			\section{Real data examples additional details}

			\subsection{Flu hospitalizations}
			
			Our flu example uses aggregate, non-identifiable hospitalization records from each of the eight largest counties in Texas from January 1, 2003 to December 30, 2009.  Our data-use agreement does not permit dissemination of these hospital records.  We also use data on temperature and air quality (particulate matter) in these counties, which can be obtained directly from CDC Wonder (http://wonder.cdc.gov/).
			
			\subsection{Motion capture }
			% \url{mocap.cs.cmu.edu}
			To construct the tensors involved in the five task considered, we use the variables: the second coordinate for root (variable 2), the first coordinate for upperback (variable 10),  the first coordinate for upperneck (variable 19), the first coordinate for head (variable 22), the second coordinate for rhumerus (variable 28), rradius (variable 30), the second coordinate for lhumerus (variable 40), lradius (variable 42), the second coordinate for lhand  (variable 44), lfingers (variable 45), rtibia  (variable 52), ltibia (variable 59).
			
			For task 138--story we use videos corresponding to subject 138 in the moCap repository. Videos 11-14 are used to construct the training tensor while 15-18 are used to build the test tensor.      
			
			To build task 107 walking we use videos from subject 107. For training we use videos 1-4 for training while videos 5-8 are used for testing.
			
			For task 09-run we use videos corresponding to subject 9. Videos 1-4 are used for training, and videos 5-8 are used for testing.
			
			To construct task 138 marching we take videos from subject 138. For training we use videos 1-4 for training while videos 5-8 are used for testing.
			
			Finally, for task 126, the training set is built using videos 1,3,6,8 while the test set uses videos 2,4,7,9.

\end{document}